\documentclass[11pt]{article}
\usepackage[top=1.0in, bottom=1.0in, left=1in, right=1in]{geometry} 
\usepackage[T1]{fontenc}
\usepackage{textcomp}
\usepackage{dsfont}		
\usepackage{comment}
\usepackage[frenchmath]{mathastext}  			
\usepackage{algorithm}
\usepackage{algpseudocode}
\usepackage{amsmath,amssymb}
\usepackage{physics}
\usepackage{bm}
\usepackage{bbm}
\usepackage{graphicx} 										
\usepackage[font={small}]{caption}
\usepackage{booktabs}
\usepackage{multirow}
\usepackage{subfigure}
\usepackage{float}
\usepackage{upgreek}
\usepackage{enumitem}
\usepackage{mdwlist}												
\usepackage[dvipsnames]{xcolor}							
\usepackage[style=numeric, natbib=true,uniquename=true, maxcitenames=2, maxbibnames=999, uniquelist=false, backend=biber]{biblatex}
\usepackage{todonotes}

\usepackage[plainpages=false, pdfpagelabels]{hyperref} 
	\hypersetup{
		colorlinks   = true,
		citecolor    = RoyalBlue,
		linkcolor    = RubineRed, 
		urlcolor     = RubineRed
	}
\usepackage{xpatch}

\usepackage[mathscr]{eucal}                 
\usepackage{ulem}
\usepackage{xcolor,mathtools}                      
\mathtoolsset{showonlyrefs=true}

\usepackage{amsthm}
\usepackage{thmtools}            
	\allowdisplaybreaks                   
	\theoremstyle{plain}
	\newtheorem{theorem}{Theorem}
	\newtheorem{lemma}[theorem]{Lemma}       
	\newtheorem{proposition}[theorem]{Proposition}
	\newtheorem{corollary}[theorem]{Corollary}
	\theoremstyle{definition}

	\newtheorem{remark}{Remark}

\usepackage{soul}
\usepackage{authblk}
\usepackage{setspace,lipsum}

\newcommand\Eb{\mathds{E}}

\newcommand\Hb{\mathds{H}}
\newcommand\Ib{\mathds{1}}
\newcommand\Lb{\mathds{L}}
\newcommand\Pb{\mathds{P}}

\newcommand\Rb{\mathds{R}}
\newcommand\Sb{\mathds{S}}

\newcommand\Fc{\mathcal{F}}

\newcommand\Lc{\mathscr{L}}

\newcommand\Yc{\mathcal{Y}}
\newcommand\Zc{\mathcal{Z}}

\newcommand\eps{\varepsilon}

\newcommand\Keps{K^\varepsilon}

\newcommand\Yeps{Y^\varepsilon}
\newcommand\Zeps{Z^\varepsilon}
\newcommand\feps{f^\varepsilon}
\newcommand\phieps{\phi^\varepsilon}

\newcommand\Ycb{\overline{\Yc}}
\newcommand\YyZc{Y^{y,\bm{\Zc}}}
\newcommand\YcZc{\Yc^{\bm{\Zc}}}
\newcommand\YcbZc{\overline{\Yc}^{\bm{\Zc}}}

\newcommand\Zct{\widetilde{\Zc}}

\newcommand{\Var}{\mathrm{Var}}
\renewcommand{\d}{\partial}

\newcommand{\ess}{\mathrm{ess}}

\newcommand{\eqlnostar}[2]{\begin{align}\label{#1}#2\end{align}}
\newcommand{\eqstar}[1]{\begin{align*}#1\end{align*}}
\newcommand{\eq}[1]{\ifthenelse{\equal{#1}{*}}
  {\eqstar}
  {\eqlnostar{#1}}
 }
\newlist{describe}{description}{1}
\setlist[describe,1]{%
  font=\normalfont\textsf,
  itemindent=0pt,
  wide,
  itemsep=0pt,topsep=2pt,
  format={\normalfont\textsfcolor}
}
\newcommand*{\textsfcolor}[1]{\textsf{#1:}}
\makeatletter
\xpatchcmd{\enit@description@i}{%
  \labelsep\z@
}{%
  \phantomsection
  \let\org@label\label
  \let\label\@gobble
  \protected@edef\@currentlabel{##1}%
  \let\label\org@label
  \labelsep\z@
}{}{\undefined}
\makeatother

\begin{document}

\title{Deep Learning for Reflected BSDEs: Regularization and Error Analysis}

\author{Ruimeng Hu\thanks{Department of Mathematics, and Department of Statistics and Applied Probability,
University of California, Santa Barbara, CA 93106-3080, USA.
\, Email: \texttt{rhu@ucsb.edu}.}
\qquad 
Yihan Zou\thanks{Adam Smith Business School, University of Glasgow, Glasgow, G12 8QQ, United Kingdom.\, Email: \texttt{yihan.zou@outlook.com}.}
}

\date{\today}

\maketitle

\begin{abstract}

Reflected backward stochastic differential equations (RBSDEs) provide a probabilistic formulation for obstacle constrained problems, but existing deep learning methods for their high dimensional solution remain limited. In this paper, we propose two deep learning schemes for RBSDEs, a deep forward scheme (DFS) and a deep backward scheme (DBS), by first reducing the reflected problem to a family of regularized BSDEs. Our main theoretical contribution concerns the DBS: we establish an explicit error bound showing that, for each fixed regularization parameter $\eps>0$, the approximation error between the DBS solution and the solution to the regularized BSDE is controlled by the associated training loss. We prove that this training loss can be controlled by the universal approximation capability of neural networks. Together, these results yield a theoretical foundation for the deep learning-based solution and complement existing analysis for forward type methods. We illustrate the framework on high dimensional American option pricing, where the reflected formulation allows us to address the continuous time exercise feature directly rather than through a Bermudan approximation. Numerical experiments demonstrate that both DFS and DBS deliver accurate solutions in high dimensions.

\smallskip

\noindent \textbf{Keywords}: Reflected backward stochastic differential equation (RBSDE), regularization, deep learning, error analysis, American option pricing.

\end{abstract}


\tableofcontents


\section{Introduction}

Reflected backward stochastic differential equations (RBSDEs) arise naturally in stochastic control, optimal stopping, variational inequalities, and mathematical finance. A prominent example is the pricing of American-style options, where the reflection enforces the early-exercise constraint. Despite their wide range of applications, the numerical solution of RBSDEs remains challenging, especially in high dimensions. In this paper, we develop and analyze deep learning-based numerical methods for solving RBSDEs in high dimensions.

Let $(\Omega,\Fc,\{\Fc_t\}_{0\le t\le T},\Pb)$ be a filtered probability space, where $\Fc = \Fc_T$ and $\{\Fc_t\}_{0\le t\le T}$ is the natural filtration generated by a $d$-dimensional ($d\ge 1$) Brownian motion $W$, augmented by the $\Pb$-null sets (the usual conditions of right-continuity and completeness are satisfied). We consider the forward diffusion
\eqlnostar{eq-SDE}{X_t = x + \int_0^t b(s,X_s)\dd s + \int_0^t \sigma(s,X_s)\dd W_s,}
with initial condition $x\in\Rb^d$, together with  the reflected backward equation
\eqlnostar{eq-RBSDE}{\left\{\begin{array}{l}
Y_t = \Phi(X_T) + \int_t^T f\left(s, X_s, Y_s, Z_s\right)\dd s + K_T - K_t - \int_t^T Z_s \dd W_s, \\
Y_t \geq S_t, \\
\int_0^T\left(Y_t-S_t\right) \dd K_t=0,\end{array}\right.}
where $S_t = \Phi(X_t)$ and the process $K$ is continuous and non-decreasing with $K_0 = 0$. Under standard Lipschitz and integrability assumptions on the coefficients, existence and uniqueness of a solution $(Y,Z,K)$ are well known; see \citet{el1997reflected}. Our objective is to approximate the pair $(Y,Z)$ efficiently and accurately in high dimensions. 

Compared with standard BSDEs, the main difficulty in an RBSDE lies in the reflection process $K$. The presence of the reflection process $K$ constitutes the main difficulty to numerical approximation. Unlike standard BSDEs, where the solution $(Y,Z)$ can often be characterized through conditional expectations, the process $K$ is defined only implicitly through the Skorokhod condition and depends on the entire trajectory of $Y$. This feature makes both the design and the analysis of numerical methods substantially more delicate. Classical discretization and regression methods for BSDEs, such as those in \citet{bouchard2004discrete,lemor2006rate}, therefore do not extend directly to the reflected setting. 

Classical numerical methods for RBSDEs include time discretization combined with regression techniques \cite{ma2005representations,bouchard2008discrete}, penalization schemes \cite{gobet2026improved}, etc. Regression-based schemes rely on least-squares approximations of conditional expectations, while the penalization method approximates the reflection by adding increasingly stiff terms to the driver. Although these methods are effective in low or moderate dimensions, their performance deteriorates in high dimensions. In particular, least-squares regression, which relies on approximating conditional expectations on finite-dimensional function spaces, becomes increasingly expensive and inaccurate as the dimension grows, because the complexity of the approximation space increases rapidly.

More recently, deep learning has emerged as an effective tool for solving high-dimensional BSDEs, motivated by the universal approximation ability of neural networks. Several deep learning schemes for BSDEs reformulate the problem as a (sequence of) global optimization problem over network parameters and have been shown to have strong empirical and theoretical performance for standard BSDEs in high dimensions (see, e.g., \cite{han2018solving,han2020convergence,hure2020deep,germain2022approximation,wang2018deep,andersson2023convergence,chen2021deep,gao2023convergence,andersson2025deep,gnoatto2025convergence,ji2022solving,negyesi2025deep,huang2026compounded}).

For reflected BSDEs, however, the presence of $K$  still poses a major challenge. On the one hand, several convergence results for deep backward schemes have been developed only for standard, non-reflected BSDEs. For instance, \citet{gao2023convergence} establish convergence of a backward deep BSDE method under assumptions tailored to the non-reflected setting, including small Lipschitz constants of the driver with respect to the solution variables. Such arguments do not transfer directly to RBSDEs because of the additional reflection constraint. On the other hand, there are deep-learning methods designed more specifically for reflected or obstacle-type problems. A notable example is the reflected deep backward dynamic programming method (RDBDP) of \citet{hure2020deep}, where the obstacle is enforced by taking the maximum between the obstacle and a continuation value approximated by neural networks. This method solves a sequence of local optimization problems, one at each time step. As pointed out in the subsequent work \citet{germain2022approximation}, maintaining stability in this backward procedure may require many stochastic gradient iterations at early time steps, so the computational cost can increase rapidly as the time discretization becomes finer. More recently, \citet{bayraktar2024deep} extend the RDBDP framework to path-dependent American option pricing. We also mention related neural-network-based methods for optimal stopping problems in \cite{becker2019deep,becker2020pricing,hu2020deep,kohler2010pricing,lapeyre2021neural,reppen2023deep,reppen2025neural,herrera2024optimal}. There is also a related reinforcement learning literature on optimal stopping, in which stopping decisions are either randomized or treated through penalized Hamilton-Jacobi-Bellman (HJB) formulations. For example, \citet{dong2024randomized} studies an intensity-based randomized stopping framework with an unnormalized negentropy regularizer, while \citet{dai2026learning} consider a penalized HJB with Bernoulli randomized controls and the usual Shannon entropy regularization.

A different route has recently been proposed in \citet*{agarwal2026numerical}, where the RBSDE is approximated by a family of regularized BSDEs. In this framework, the reflection constraint is replaced by a smooth regularization term added to the drive of $Y$, and the solution of the regularized BSDE converges to that of the original RBSDE as the regularization parameter $\eps\downarrow 0$. This regularization removes the explicit reflection process from the equation and yields a formulation that is considerably more amenable to deep learning solvers. In particular, it provides a natural bridge from reflected BSDEs to standard BSDEs that learning methods can be applied.

Motivated by this regularization framework, we propose two deep learning-based numerical schemes for RBSDEs: a deep forward scheme (DFS) and a deep backward scheme (DBS). Both methods are applied to the regularized BSDE approximation of the reflected problem. The DFS follows the deep BSDE philosophy \cite{han2018solving,han2020convergence}, while the DBS shares the spirit of \cite{wang2018deep},  both allowing the backward components to be approximated through neural networks trained via stochastic optimization. 

Our main theoretical contribution concerns the DBS. More specifically, under standard regularity assumptions, we derive an explicit error estimate showing that the approximation error of the DBS is controlled by its training loss function. We then prove that this loss can be made small under the universal approximation property of neural networks. Together, these results yield an error analysis for the proposed deep backward solver in the reflected setting. In contrast to several existing convergence results for deep BSDE methods, our analysis does not rely on smallness assumptions on the Lipschitz coefficient of the driver with respect to the solution variables.

From the computational perspective, the proposed framework also allows us to address continuous-time American option pricing through the reflected BSDE formulation itself, rather than only through Bermudan approximations with finitely many exercise dates (the case in most existing literature). This feature is particularly appealing in high dimensions, where accurate treatment of the early-exercise constraint is a major challenge. Our numerical experiments show that both DFS and DBS achieve high accuracy on high-dimensional American-style option pricing problems. To the best of our knowledge, many existing deep learning methods for high-dimensional early-exercise options typically rely on time discretization and therefore effectively approximate Bermudan problems, whereas our approach directly targets the reflected formulation associated with the continuous-time American problem.

The remainder of the paper is organized as follows. Section \ref{sec:DBS} introduces the model setup, standing assumptions, and the proposed deep learning solver for RBSDEs. Section \ref{sec:convergence analysis} is devoted to the error analysis of the deep backward scheme. In Section \ref{sec:numerical results}, we illustrate the accuracy of the proposed deep learning algorithms through numerical experiments for high-dimensional American-style option pricing.

\section{Deep Learning Schemes for RBSDEs via Regularization}\label{sec:DBS}
We start this section by introducing some preliminary notations and assumptions that will be used later. Then, we present a regularized BSDE that approximates the RBSDE, together with deep learning-based schemes for the regularized BSDE to solve the RBSDE numerically.

\subsection{Notation, Function Spaces, and Assumptions}
For any $p \ge 2$, we use the following notation. For $x\in\Rb^d$, $x^\top$ denotes its transpose and $|x|$ its Euclidean norm. For a scalar-valued function $g=g(t,x,y,z)$ with $x=(x_1,\dots,x_d)\in\Rb^d$, we write  $\d_x g:=\big(\frac{\d g}{\d x_1},\ldots,\frac{\d g}{\d x_d}\big)$ for the gradient with respect to (w.r.t.) $x$, and similarly for $\partial_y g$ and $\partial_z g$. We denote expectation under $\Pb$ by $\Eb[\cdot]:=\Eb[\cdot|\Fc_0]$, and the conditional expectation w.r.t. $\mathcal{F}_{t_i}$ by $\Eb_{t_i}[\cdot]:=\Eb[\cdot|\Fc_{t_i}]$, where $0=t_0 < t_1 \cdots < t_N=T$ is a given time partition on $[0,T]$. For a stochastic process $\psi=\{\psi_t, 0\le t\le T\}$, let $\psi^*_t := \underset{0 \le s \le t}{\sup} \left|\psi_s\right|$ be the running maximum of its absolute value. We further introduce the spaces $\Lb^p(\Fc)$, $\Sb^p$, $\Hb^p$ and $C^{\infty}(\Rb^d)$ defined by:
\begin{itemize}
    \item $\Lb^p(\Fc) := \left\{\Fc\text{-measurable random variables}\; \zeta:\|\zeta\|^p_p := \Eb\left[|\zeta|^p\right]<\infty\right\}$.

    \item $\Sb^p := \Big\{\text{adapted processes}\;\psi = \{\psi_t,\;0\le t\le T\}: \|\psi\|_{\infty,p}^p := \Eb\big[\underset{0 \le t \le T}{\sup}\left|\psi_t\right|^p\big] < \infty\Big\}$.

    \item $\Hb^p := \Big\{\text{predictable processes}\;\psi = \{\psi_t,\;0\le t\le T\}: \|\psi\|_{p}^p := \Eb\big[\big(\int_0^T\left|\psi_t\right|^2\dd t\big)^{p/2}\big] < \infty\Big\}$. 

    \item $C^{\infty}(\Rb^d) :=$ $\Big\{$infinitely continuously differentiable functions $g:\Rb^d\rightarrow\Rb$ $\Big\}$.  
\end{itemize}

We impose the following standing assumptions throughout the paper.
\begin{description}
\item[Assumption ($\textbf{H}_1$)\label{hp-1}]
The coefficients of the forward SDE \eqref{eq-SDE} and the RBSDE \eqref{eq-RBSDE} satisfy 
    \begin{itemize}
        \item $b(\cdot,0)$, $\sigma(\cdot,0)$, are $f(\cdot,0,0,0)$ are bounded, and $\Phi(0)$ is finite.
        
        \item The coefficients $b,\sigma,\Phi$ and the generator $f$ are uniformly $C_f$-Lipschitz continuous in their space variables, i.e. for all $t\in[0,T]$,
        \eqlnostar{}{
        & |b(t,x)-b(t,x')| + |\sigma(t,x)-\sigma(t,x')| + |\Phi(x)-\Phi(x')| \le C_f|x-x'|,\\
        & |f(t,x,y,z) - f(t,x',y',z')| \le C_f\left(|x - x'| + |y - y'| + |z - z'|\right),
        }
        for all $y,y'\in\Rb, x,x',z,z'\in\Rb^d$.
        
        \item The functions $b, \sigma, f$ are uniformly $\tfrac{1}{2}$-H\"older continuous in $t$.
    \end{itemize}
\end{description}

\begin{description}
\item[Assumption ($\textbf{H}_2$)\label{hp-2}]
The obstacle $S := \{S_t, 0\le t\le T\}$ is a progressively measurable scalar process s.t. $S\in\Sb^p$. Furthermore, $S$ satisfies:
    \begin{itemize}
        \item The obstacle $S$ admits the following semi-martingale representation:
        \eqlnostar{eq-representation of S}{S_t = S_0 + \int_0^t V_s\dd W_s + \int_0^t U_s\dd s + A_t,}
        where $V\in \Sb^p$, $U\in \Sb^p$, $A$ is a continuous, non-decreasing process of finite variation with
        $A_0 = 0$ and
        $\int_0^T |U_t|\dd t + A_T <\infty$.
        We further assume that $\dd A_t$ is singular w.r.t. the Lebesgue measure, so that the above decomposition is unique.
        \item $\overline{\kappa}_{\infty} := \underset{(t,\omega)\in[0,T]\times\Omega}{\ess\;\sup}\;\kappa_t(\omega)< \infty$ where \begin{align}\label{eq:kappa}\kappa_t := \left(f(t,X_t,S_t,V_t) + U_t\right)^-, \end{align}
        and $f$ is the generator of the RBSDE \eqref{eq-RBSDE}.
        \item For any process $\psi \in \{S,U,V\}$, we have $\Eb|\psi_t - \psi_s|^2 \le C |t-s|$, for $0\le s\le t\le T$.
    \end{itemize}
\end{description}

\subsection{A Regularized Approximation of the RBSDE}\label{section:regularized RBSDE}
We next introduce a smooth approximation of the step function $\Ib_{(-\infty,0]}$, which will be used to construct a regularized version of the RBSDE. For $\eps>0$, define the mollifier $\phieps:\Rb\to[0,1]$ by
\eqlnostar{eq:phi^epsilion}{\phieps(x) := \phi\left(\frac{x}{\eps}\right),}
where $\phi: \Rb\rightarrow [0,1]$ is  a smooth non-increasing function (i.e. $\phi \in C^{\infty}(\Rb)$) satisfying $\phi(x)=1$ for $x\le 0$ and $\phi(x)=0$ for $x\ge 1$. A possible choice of $\phi$ is
\eqlnostar{}{\phi(x) :=
\begin{cases}
1, & \text{if } x \le 0, \\
\dfrac{
  \exp\!\bigl(-\tfrac{1}{1 - x}\bigr)
}{
  \exp\!\bigl(-\tfrac{1}{x}\bigr)
  \;+\;
  \exp\!\bigl(-\tfrac{1}{1 - x}\bigr)
}, & \text{if } 0 < x < 1, \\
0, & \text{if } x \ge 1.
\end{cases}}
Since $\phi \in C^\infty(\Rb)$, without loss of generality, we assume that the first derivative of $\phi$ is bounded by a constant $C_\phi$, i.e. $\sup_{x \in \Rb} |\d_x\phi| \leq C_\phi$. Then, we have the following regularity property of $\phieps$.
\begin{lemma}\label{lemma:regularity of phi^epsilon}
For every $\forall h>0$, we have that $\left|\phieps(x+h) - \phieps(x)\right| \le C_{\phi}\eps^{-1} h$.
\end{lemma}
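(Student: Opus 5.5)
The plan is a direct application of the mean value theorem combined with the chain rule, using only the smoothness of $\phi$ and the assumed bound $\sup_{x\in\Rb}|\d_x\phi|\le C_\phi$.

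First, since $\phi\in C^\infty(\Rb)$, the rescaled function $\phieps(x)=\phi(x/\eps)$ is also $C^\infty$. By the chain rule its derivative is $\d_x\phieps(x)=\eps^{-1}\,\d_x\phi(x/\eps)$. Consequently $\sup_{x\in\Rb}|\d_x\phieps(x)|\le C_\phi\eps^{-1}$.

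Next, fix $x\in\Rb$ and $h>0$, and apply the mean value theorem to $\phieps$ on the interval $[x,x+h]$. This gives a point $\xi\in(x,x+h)$ with
\begin{align}
\phieps(x+h)-\phieps(x)=\d_x\phieps(\xi)\,h=\eps^{-1}\,\d_x\phi(\xi/\eps)\,h .
\end{align}
Taking absolute values and using the derivative bound yields $|\phieps(x+h)-\phieps(x)|\le C_\phi\eps^{-1}h$, as claimed. Equivalently, one could write the difference as $\int_x^{x+h}\eps^{-1}\d_x\phi(s/\eps)\dd s$ and bound the integrand pointwise.

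There is no genuine obstacle here. The only points to check are that $\phi$ is globally Lipschitz, which is the stated assumption $C_\phi<\infty$, and that the bound is uniform in $x$. The assumption is consistent with the construction: $\d_x\phi$ is continuous and vanishes outside $[0,1]$, so it is bounded. Uniformity in $x$ is exactly what allows the estimate to be used later with random arguments.
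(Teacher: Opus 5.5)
Your proof is correct and is the intended argument. The paper gives no proof of this lemma: it treats it as immediate from the assumed bound $\sup_x|\d_x\phi|\le C_\phi$. Your chain-rule computation $\d_x\phieps(x)=\eps^{-1}\d_x\phi(x/\eps)$ followed by the mean value theorem supplies exactly that omitted justification.
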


Given a regularization parameter $\eps>0$, the regularized RBSDE (ReRBSDE in short) associated with \eqref{eq-RBSDE} is given by
\eqlnostar{eq-ReRBSDE}{\left\{\begin{array}{l}\Yeps_t = \Phi(X_T) + \int_t^T \feps\left(s, X_s, \Yeps_s, \Zeps_s\right)\dd s - \int_t^T \Zeps_s \dd W_s, \quad \forall 0 \le t \le T, \\
\Keps_t := \int_0^t \phieps(\Yeps_s-S_s)\kappa_s\dd s.\end{array}\right.}
where $\kappa_t$ is defined in \eqref{eq:kappa}, and the regularized generator $\feps$ is given by
\eqlnostar{eq-regularized generator}{\feps(t,x,y,z) := f(t,x,y,z) + \phieps(y-\Phi(x))\kappa_t,} 
with $S_t = \Phi(X_t)$ and $\phieps$ defined in \eqref{eq:phi^epsilion}.

The regularization removes the singular reflection term $K$ from the backward equation and replaces it with a smooth penalty term $\phieps(\cdot)\kappa_t$. As $\eps\downarrow 0$, the solution of the ReRBSDE  \eqref{eq-ReRBSDE} converges to that of the original RBSDE \eqref{eq-RBSDE}. The next theorem summarizes the approximation result established in \citet{agarwal2026numerical}.

\begin{theorem}[{\citet[Theorems 6 and 7]{agarwal2026numerical}}]\label{thm:a priori}
Assume \ref{hp-1}-\ref{hp-2}. Let $p\ge 2$, $(Y,Z, K)\in\Sb^p\times\Hb^p\times \Sb^p$ be the solution of the RBSDE \eqref{eq-RBSDE}, and $(\Yeps,\Zeps, \Keps)\in\Sb^p\times\Hb^p \times \Sb^p$ be the solution of the ReRBSDE \eqref{eq-ReRBSDE}. Then the triplet $(\Yeps,\Zeps,\Keps)$ is bounded in $\Sb^p\times\Hb^p\times\Sb^p$ uniformly in $\varepsilon$. Moreover,
\eqlnostar{eq: regularization error}{\Eb\Big[\left|\left(Y - \Yeps\right)^*_T\right|^p + \Big(\int_0^T \left|Z_t - \Zeps_t\right|^2\dd t\Big)^{p/2} + \left|\left(K - \Keps\right)^*_T\right|^p\Big]\le C {\varepsilon}^{p/2}\Eb\left[\left|\Keps_T\right|^{p/2}\right]\le C {\varepsilon}^{p/2},}
where $C$ is a generic positive constant depending only on $\Phi$, $f$, $S$, $T$, and  $p$ (whose value may change from line to line). Additionally, if $f$ is non-increasing in $y$, we have
\eqlnostar{}{0\le \Yeps_t - Y_t\le \varepsilon,\;\forall t\in[0,T],\; \Pb-\text{a.s.}}
\end{theorem}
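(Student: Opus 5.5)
Since this theorem is quoted from \citet{agarwal2026numerical}, I would reconstruct its proof in three steps: uniform a priori bounds, a stability estimate driven by the Skorokhod condition, and a comparison argument for the monotone case.

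\emph{Uniform bounds.} The regularized generator $\feps$ is Lipschitz in $(y,z)$, uniformly in $\eps$ (only the $y$-dependence through $\phieps$ carries an $\eps^{-1}$ Lipschitz constant, but the term $\phieps(\cdot)\kappa_t$ is bounded by $\overline{\kappa}_\infty$). I would therefore treat it as the bounded source term $\phieps(\Yeps_s-S_s)\kappa_s\in[0,\overline{\kappa}_\infty]$ and not use its Lipschitz constant. Standard BSDE $L^p$ estimates (Itô on $|\Yeps|^2$, BDG, Gronwall) then give bounds on $\|\Yeps\|_{\infty,p}+\|\Zeps\|_p$ that are independent of $\eps$. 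Because $\Keps_T\le T\overline{\kappa}_\infty$, the process $\Keps$ is bounded as well.

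\emph{Regularization error.} Set $\delta Y=Y-\Yeps$ and $\delta Z=Z-\Zeps$, and apply Itô to $|\delta Y|^2$. This gives the Lipschitz terms from $f$ plus the cross term $2\int_t^T\delta Y_s\,(\dd K_s-\dd\Keps_s)$. The key observation concerns the sign of this cross term.
\begin{itemize}
\item On $\dd K$, the Skorokhod condition gives $Y=S$, so $\delta Y_s\,\dd K_s=(S_s-\Yeps_s)\dd K_s\le (S_s-\Yeps_s)^+\dd K_s$.
\item On $\dd\Keps$, since $Y\ge S$, we have $-\delta Y_s\phieps(\Yeps_s-S_s)\kappa_s\le(\Yeps_s-S_s)\phieps(\Yeps_s-S_s)\kappa_s\le\eps\,\phieps\kappa_s$, because $\phieps$ vanishes when $\Yeps-S\ge\eps$. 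This contributes at most $\eps\,\dd\Keps$.
\end{itemize}
The penetration $(S-\Yeps)^+$ must also be controlled. Using the semimartingale form \eqref{eq-representation of S}, where $\kappa$ is exactly the negative part of the drift of $S$ along $Y=S$, one shows that $\Yeps$ cannot fall below $S$. The argument applies Itô/Tanaka to $(S-\Yeps)^+$: on $\{\Yeps\le S\}$ we have $\phieps=1$, so the drift of $\Yeps-S$ there is $-f(\cdot,\Yeps,\Zeps)+\kappa-U$. At $\Yeps=S$ with $\Zeps=V$ this drift is $\ge0$, because $\kappa\ge(f+U)^-$. This yields $(S-\Yeps)^+\lesssim$ Lipschitz terms, so essentially $\Yeps\ge S$ up to a Gronwall-controlled error. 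Combining these bounds with BDG and Gronwall yields $\Eb[\sup|\delta Y|^p+(\int|\delta Z|^2)^{p/2}]\le C\eps^{p/2}\Eb[|\Keps_T|^{p/2}]$: the $\eps\int\dd\Keps$ term enters squared-rooted after taking the $p/2$ power. The estimate for $K-\Keps$ then follows by writing $K_t-\Keps_t$ from the two equations as a combination of $\delta Y$, $\int\delta f$ and $\int\delta Z\,\dd W$.

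\emph{Monotone case.} When $f$ is non-increasing in $y$, I would use comparison. First, $\Yeps\ge S$ by the argument above, so $\Yeps$ is a supersolution dominating the obstacle. The minimality characterization of the RBSDE as the smallest supersolution above $S$ then gives $\Yeps\ge Y$. For $\Yeps-\eps\le Y$, I would check that $\Yeps-\eps$ solves a BSDE with the same driver plus a term $\phieps\kappa$ that is active only where $\Yeps-\eps<S$, i.e. only where $Y$ may be at the obstacle. A comparison with monotonicity in $y$ then absorbs the shift of $\eps$ without Gronwall growth.

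The main obstacle is the penetration step showing $\Yeps\ge S$ (or controlling $(S-\Yeps)^+$). It needs the singular part $A$ to be nondecreasing, so that it only helps, and it needs the local-time term at zero to be handled carefully. I would first try the Tanaka formula on $(S-\Yeps)^+$ together with the fact that $\dd\langle S-\Yeps\rangle$ charges $\{S=\Yeps\}$ only through $V-\Zeps$.
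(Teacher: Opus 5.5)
The paper gives no proof of this statement (it is imported from \citet{agarwal2026numerical}), so your outline has to stand on its own. Its architecture is sound. Treating $\phieps(\Yeps-S)\kappa\in[0,\overline{\kappa}_\infty]$ as a bounded source gives bounds uniform in $\eps$. The inequality $-\delta Y_s\,\dd\Keps_s\le\eps\,\dd\Keps_s$ is exactly what produces the factor $\eps^{p/2}\Eb[|\Keps_T|^{p/2}]$. The bound on $K-\Keps$ follows from the two equations, and the shift by $\eps$ is the right device for the monotone case. The weak point is the step you flag as the main obstacle: it is mis-signed and stated too weakly.

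\emph{The sign.} On $\{\Yeps\le S\}$ one has $\dd(S-\Yeps)_t=\big(U_t+f(t,X_t,\Yeps_t,\Zeps_t)+\kappa_t\big)\dd t+\dd A_t+(V_t-\Zeps_t)\dd W_t$. The fact you need is $f(t,X_t,S_t,V_t)+U_t+\kappa_t=\big(f(t,X_t,S_t,V_t)+U_t\big)^+\ge0$. Your expression $-f+\kappa-U$, evaluated at $(S,V)$, is negative wherever $f+U>0$.

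\emph{Exactness.} ``$\Yeps\ge S$ up to a Gronwall-controlled error'' would not suffice. The term $\int(S-\Yeps)\,\dd K$ would then have no sign, and nothing in your argument bounds it by $\eps\Keps_T$, so the stated rate would not follow. The minimal-supersolution step giving $\Yeps\ge Y$ also needs $\Yeps\ge S$ exactly.

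Fortunately the argument closes exactly. Set $\Gamma:=(S-\Yeps)^+$ and apply It\^o's formula to $\Gamma^2$; the map $x\mapsto(x^+)^2$ is $C^1$, so no local time appears. Using $\phieps=1$ on $\{\Gamma>0\}$, $\dd A\ge0$, $\Gamma_T=0$ and the identity above, you get
\[
\Gamma_t^2+\int_t^T\Ib_{\{\Gamma_s>0\}}|V_s-\Zeps_s|^2\dd s\le 2C_f\int_t^T\Gamma_s\big(\Gamma_s+|V_s-\Zeps_s|\big)\dd s-2\int_t^T\Gamma_s(V_s-\Zeps_s)\dd W_s .
\]
There is no source term, so Young's inequality and Gronwall give $\Gamma\equiv0$, i.e.\ $\Yeps\ge S$ a.s. Then $\int\delta Y\,\dd K=\int(S-\Yeps)\,\dd K\le0$, and the rest of your outline goes through.

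Note also that $\Yeps\ge Y$ then holds without monotonicity of $f$. Monotonicity is needed only for $\Yeps-Y\le\eps$. There, on $\{\Yeps-\eps>Y\}$ one has $\Yeps-S>\eps$, so $\phieps(\Yeps-S)=0$, while $f(\cdot,\Yeps,\Zeps)\le f(\cdot,\Yeps-\eps,\Zeps)$. It\^o on $\big((\Yeps-\eps-Y)^+\big)^2$, whose terminal value is $0$ and whose $\dd K$ contribution is nonpositive, then gives the claim.
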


\begin{remark}
The above theorem shows that the regularized solution $(\Yeps,\Zeps)$ converges to the solution $(Y,Z)$ of the original RBSDE as the regularization parameter $\eps$ vanishes. Separately, in the sequel we will show that the distance between the regularized solution $(\Yeps,\Zeps)$ and the deep learning-based solver $(\YcZc,\Zc)$, introduced in Section \ref{sec:deep forward and backward schemes}, can be controlled by the associated loss functional, which can be made arbitrarily small subject to universal approximation. Combining these results, we conclude that the solver $(\YcZc,\Zc)$ can yield an arbitrarily accurate approximation of the solution $(Y,Z)$ to the RBSDE. This conclusion is formalized in \autoref{thm:true DBS approx error} and \autoref{thm:controlofvar}.
\end{remark}

\subsection{Deep Forward and Backward Schemes}\label{sec:deep forward and backward schemes}

Let $0=t_0<t_1<\cdots<t_N=T$ be an equidistant time partition with step size $h = \tfrac{T}{N}$, and define the Brownian increments by $\Delta W_i = W_{t_{i+1}} - W_{t_{i}}, \;\forall i=0,\ldots,N-1$. We next propose two deep learning schemes for approximating the solution of the regularized RBSDE \eqref{eq-ReRBSDE}: a deep forward scheme (DFS), inspired by \citet{han2018solving}, and a deep backward scheme (DBS), inspired by \citet{wang2018deep}.

For both schemes, we use a family of neural networks 
\[
\bm{\Zc} := (\Zc_i)_{i=0}^{N-1}, \qquad \Zc_i(\cdot; \theta):\Rb^d \rightarrow \Rb^d, \quad i=0,\ldots,N-1,
\]
to denote the sequence of networks that approximate $\Zeps$ on the time grid. We write $\theta$ for the collection of all trainable network parameters. In the DFS, we also introduce a trainable scalar parameter $y$ to parameterize the initial value $Y_0$.

To focus on the approximation and discretization errors of the backward equation, we assume throughout that the forward process $X$ can be sampled exactly on the grid. This is the case, for example, for models with explicit transitions such as geometric Brownian motion, Ornstein-Uhlenbeck processes, and Cox-Ingersoll-Ross processes. When exact simulation is not available, one may instead use an Euler-type discretization of $X$, which introduces the usual strong error of order $h^{1/2}$ and weak error of order $h$. We now describe the two algorithms. 

\medskip
\noindent
\textbf{Deep forward scheme (DFS).} The DFS constructs the approximated process forward in time, starting from a trainable initial value $y$. 

\begin{enumerate}
    \item Initialize with $\YyZc_0 = y$.
    
    \item For $i=0,\ldots,N-1$, define recursively
    \eqlnostar{}{\YyZc_{i+1} = \YyZc_{i} - \feps\left(t_i,X_{t_i},\YyZc_{i},\Zc_i(X_{t_i}; \theta)\right) h + {\Zc_i(X_{t_i}; \theta)}\cdot \Delta W_{i}.}
    
    \item Train $(y, \theta)$ by minimizing the terminal loss
    $\Lc^{(F)}(y,\theta) := \Eb\big|\YyZc_N - \Phi(X_{t_N})\big|^2$,
    using stochastic optimization methods such as the Adam optimizer.

    \item  After training, the optimized value of $y$ is taken as the approximation of the initial value $Y_0$.
\end{enumerate}

\noindent
\textbf{Deep backward scheme (DBS).} The DBS starts from the terminal condition and propagates the approximated process backward along the time grid.

\begin{enumerate}

    \item Initialize with $\YcZc_N = \Phi(X_T).$
    
    \item For $i=N-1,\ldots,0$, define recursively
    \eqlnostar{eq-deep backward scheme}{\YcZc_{i} = \YcZc_{i+1} + \feps\left(t_{i+1},X_{t_{i+1}},\YcZc_{i+1},\Zc_i(X_{t_i}; \theta)\right) h - \Zc_i(X_{t_i}; \theta)\cdot \Delta W_{i}.}
    
   \item Train $\theta$ by minimizing the loss
    $\Lc^{(B)}(\theta) := \Eb\big|\YcZc_{0} - \Eb\big[\YcZc_{0}\big]\big|^2$,
    again using stochastic optimization methods such as Adam.

    \item After training, we use $\Eb\big[\YcZc_{0}\big]$ as the approximation of the initial value $Y_0$.

\end{enumerate}
In practice, the expectations in the loss functions are replaced by empirical averages over simulated mini-batches of trajectories of the forward process $X$ and Brownian increments. The network is then trained using these empirical losses. We note that the theoretical analysis is carried out for the population loss.

\section{Error Analysis}\label{sec:convergence analysis}
The analysis focuses on the DBS. We first derive the approximation error between the DBS solution $(\YcZc, \Zc)$ and the solution $(\Yeps,\Zeps)$ of the ReRBSDE \eqref{eq-ReRBSDE}, and then establish an upper bound for the DBS loss function $\Var(\YcZc_{0})$. The error analysis of the DFS can be obtained by adapting the arguments in \cite{han2020convergence} with suitable modifications. Hereafter, we assume $\eps<1$ and denote by $C$ a generic positive constant whose value may change from line to line but is independent of $h$ and $\eps$, and network parameters $\theta$. For brevity, in the proofs we omit the explicit time argument of $f$ whenever it is clear from the context, and use the subscript $i$ for $t_{i}$, $i=0,\ldots,N-1$. 

The main technical difficulty is that $\YcZc_i$, constructed in the DBS \eqref{eq-deep backward scheme}, is not $\Fc_{t_i}$-adapted; see also \citet[Remark 3.7]{gao2023convergence}. This prevents the direct use of the martingale orthogonality
\eqlnostar{}{\Eb\Big[\left(\YcZc_{i} - \Yeps_{t_i}\right)\int_{t_i}^{t_{i+1}}(\Zc_i(X_{t_i}) - \Zeps_s)\dd W_s\Big] = 0,}
which is a key ingredient in the classical numerical analysis of BSDE time discretization error, such as \citet[Theorem 5.3]{zhang2004numerical}. 

Our analysis addresses this issue by introducing the $\Fc_{t_i}$-adapted projection 
\eqlnostar{eq-Ycb definition}{\YcbZc_i := \Eb_i\left[\YcZc_{i}\right], i=0,\ldots,N-1, \quad \YcbZc_N = \YcZc_N = \Phi(X_{t_N}),}
which serves as an $\Fc_{t_i}$-measurable approximation of $Y_{t_i}$. We then quantify the projection residual $\YcZc_i - \YcbZc_i$ in terms of the training loss $\Var(\YcZc_0)$ and the time discretization error. This allows us to recover martingale orthogonality at the level of $\YcbZc_i$, while still controlling the original DBS variables $\YcZc$. 

Here we adopt the convention that $\Eb[\YcZc_{0}] = \Eb_0[\YcZc_{0}]$. 
Then, we can deduce, by the martingale representation theorem, that there exists a square-integrable and $\Fc$-adapted $\Zct$ such that
\eqlnostar{eq-martingale representation theorem of ycbzc org}{\YcbZc_{i+1} = \Eb_i\left[\YcbZc_{i+1}\right] + \int_{t_i}^{t_{i+1}}\Zct_s\dd W_s.}
Plug \eqref{eq-deep backward scheme} into the above equation, apply the law of iterated conditional expectations, and we have
\eqlnostar{eq-martingale representation theorem of ycbzc}{\YcbZc_{i+1} = \YcbZc_{i} - \Eb_i\left[\feps\left(t_{i+1},X_{t_{i+1}},\YcZc_{i+1},\Zc_i\right) h\right] + \int_{t_i}^{t_{i+1}}\Zct_s\dd W_s.}

\subsection{A Priori Error Estimate}
We begin with a lemma showing that the deviation of $\YcZc_{i}$ from its $\Lb^2(\Fc_{t_i})$-projection can be controlled by the variance of $\YcZc_0$, up to small remainder terms.
\begin{lemma}\label{lemma:estimate on L2 projection residual}
Assume \ref{hp-1}-\ref{hp-2}. Then, for any $\gamma_0>0$ and sufficiently small $h<\gamma_0$, the following estimates hold:
\eqlnostar{eq-estimate on L2 projection residual}{ \underset{0\leq i\leq N-1}{\max}\,\Eb\left|\YcZc_{i} - \YcbZc_{i}\right|^2 &\le \exp(C_1 T)\Var\left(\YcZc_{0}\right) + C\exp(C_1 T)\eps^{-2}h,\\
\label{eq-estimate on Zi Ztilde difference}
 \sum_{i=0}^{N-1}\Eb\int_{t_i}^{t_{i+1}}|\Zc_i - \Zct_s|^2\dd s &\le C_2\Var\left(\YcZc_{0}\right) + C\exp(C_1 T)\eps^{-2}h,}
where
\eqlnostar{eq-C1}{& C_1:=\frac{1}{\gamma_0}
+32\gamma_0\left(C_f^2+C_{\phi}^2\eps^{-2}\overline{\kappa}_{\infty}^2\right),\\
\label{eq-C2}
& C_2 := 1 + \exp(C_1 T)\Big(\frac{1}{\gamma_0}
+16\gamma_0\left(C_f^2+C_{\phi}^2\eps^{-2}\overline{\kappa}_{\infty}^2\right)\Big).}
\end{lemma}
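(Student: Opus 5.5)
We track the residual $\delta_i := \YcZc_i - \YcbZc_i$, which has $\Eb_i[\delta_i]=0$ by construction and satisfies $\delta_0 = \YcZc_0 - \Eb[\YcZc_0]$, so $\Eb|\delta_0|^2 = \Var(\YcZc_0)$. Subtracting \eqref{eq-martingale representation theorem of ycbzc} from \eqref{eq-deep backward scheme} gives the one-step identity
\begin{align}
\delta_i = \delta_{i+1} + h\left(F_{i+1} - \Eb_i[F_{i+1}]\right) - \int_{t_i}^{t_{i+1}}\left(\Zc_i - \Zct_s\right)\dd W_s ,
\end{align}
where $F_{i+1} := \feps(t_{i+1},X_{t_{i+1}},\YcZc_{i+1},\Zc_i)$. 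We rewrite it as
\begin{align}
\delta_{i+1} + \int_{t_i}^{t_{i+1}}(\Zc_i-\Zct_s)\dd W_s = \delta_i - h\left(F_{i+1}-\Eb_i[F_{i+1}]\right),
\end{align}
so the backward step becomes a forward estimate from the known starting value $\delta_0$. The key structural point is that $\delta_{i+1}$ has zero $\Fc_{t_{i+1}}$-conditional mean, hence is orthogonal to every $\Fc_{t_{i+1}}$-measurable variable, in particular to the stochastic integral over $[t_i,t_{i+1}]$. Squaring the left side and taking expectations therefore gives exactly $\Eb|\delta_{i+1}|^2 + \Eb\int_{t_i}^{t_{i+1}}|\Zc_i-\Zct_s|^2\dd s$.

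For the right side we use Young's inequality with weight $h/\gamma_0$:
\begin{align}
\Eb\left|\delta_i - h(F_{i+1}-\Eb_i F_{i+1})\right|^2 \le \left(1+\tfrac{h}{\gamma_0}\right)\Eb|\delta_i|^2 + (1+\gamma_0/h)\,h^2\,\Eb\left|F_{i+1}-\Eb_iF_{i+1}\right|^2 .
\end{align}
Since $h<\gamma_0$, the coefficient $(1+\gamma_0/h)h^2$ is at most $2\gamma_0 h$.

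The main obstacle is bounding the conditional variance of $F_{i+1}$. Since $\Zc_i=\Zc_i(X_{t_i})$ is $\Fc_{t_i}$-measurable, we compare $F_{i+1}$ with the $\Fc_{t_i}$-measurable surrogate $G_i := \feps(t_i,X_{t_i},\YcbZc_i,\Zc_i)$ and use $\Eb|F-\Eb_iF|^2\le \Eb|F-G_i|^2$. By \ref{hp-1} and Lemma \ref{lemma:regularity of phi^epsilon}, with $\kappa$ bounded by $\overline{\kappa}_\infty$, the map $\feps$ is Lipschitz in $y$ with constant $C_f+C_\phi\eps^{-1}\overline{\kappa}_\infty$. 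Its dependence on $x$ through $\phieps(y-\Phi(x))$ carries the same $\eps^{-1}$ factor, and it is $\tfrac12$-Hölder in $t$. This yields
\begin{align}
\Eb|F_{i+1}-G_i|^2 \le 4\left(C_f^2+C_\phi^2\eps^{-2}\overline{\kappa}_\infty^2\right)\left(\Eb|\YcZc_{i+1}-\YcbZc_i|^2\right) + C\eps^{-2}h .
\end{align}
The remaining term $\kappa_{t_{i+1}}-\kappa_{t_i}$ is controlled through \ref{hp-2}. We then split $\YcZc_{i+1}-\YcbZc_i = \delta_{i+1} + (\YcbZc_{i+1}-\YcbZc_i)$. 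The first piece produces a $\gamma_0 h\,\eps^{-2}\Eb|\delta_{i+1}|^2$ term, which sits on the left-hand side with a small coefficient and is absorbed since $h$ is small. The second piece is bounded by \eqref{eq-martingale representation theorem of ycbzc}: it is $O(h)$ in drift plus $\Eb\int|\Zct|^2$. This requires an a priori $L^2$ bound on $(\YcZc,\Zct)$, uniform in $h$ and of order $C\eps^{-2}$, obtained by a standard discrete Gronwall argument on the scheme. Here a careful accounting of constants, with factors like $16$ and $32$, gives $C_1$.

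Collecting terms, we obtain
\begin{align}
(1-Ch)\,\Eb|\delta_{i+1}|^2 + \Eb\int_{t_i}^{t_{i+1}}|\Zc_i-\Zct_s|^2\dd s \le (1+C_1 h)\,\Eb|\delta_i|^2 + C\eps^{-2}h^2 .
\end{align}
Dropping the $Z$ term and iterating with the discrete Gronwall lemma gives $\Eb|\delta_i|^2\le e^{C_1T}\left(\Var(\YcZc_0)+C\eps^{-2}h\right)$, which is \eqref{eq-estimate on L2 projection residual}. Summing the inequality over $i$ instead and telescoping gives
\begin{align}
\sum_i\Eb\int_{t_i}^{t_{i+1}}|\Zc_i-\Zct_s|^2\dd s \le \Eb|\delta_0|^2 + C_1' h\sum_i\Eb|\delta_i|^2 + C\eps^{-2}h .
\end{align}
Inserting the first bound, with $hN=T$, yields \eqref{eq-estimate on Zi Ztilde difference} with $C_2$ as stated.
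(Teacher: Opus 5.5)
Your proposal is essentially the paper's proof. It uses the same orthogonality identity for $\delta_{i+1}$ against $\int_{t_i}^{t_{i+1}}(\Zc_i-\Zct_s)\dd W_s$, Young's inequality with weight $h/\gamma_0$, the Lipschitz bound (carrying the $\eps^{-1}$ factor) on the conditional variance of the driver, absorption of the $\Eb|\delta_{i+1}|^2$ term for small $h$, forward discrete Gronwall from $\Eb|\delta_0|^2=\Var(\YcZc_0)$, and summation for the $Z$-estimate. The only variation is your comparison point $\feps(t_i,X_{t_i},\YcbZc_i,\Zc_i)$. The paper instead uses $\feps(t_i,X_{t_i},\Eb_i[\YcbZc_{i+1}],\Zc_i)$, for which $\YcZc_{i+1}-\Eb_i[\YcbZc_{i+1}]=\delta_{i+1}+\int_{t_i}^{t_{i+1}}\Zct_s\dd W_s$ holds exactly, so it avoids your extra drift term $h\,\Eb_i[F_{i+1}]$.

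Like the paper, you rely on an a priori bound for $\Eb\int_0^T|\Zct_s|^2\dd s$, and in your variant also for $\Eb|F_{i+1}|^2$. This bound is $O(1)$ in $\eps$, because $0\le\phieps\kappa\le\overline{\kappa}_\infty$ and $f$ has $\eps$-free Lipschitz constants; it does depend on the size of the network outputs, a dependence the paper also suppresses. It has to be $O(1)$: the $O(\eps^{-2})$ bound you wrote would degrade the remainder to $\eps^{-4}h$.
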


\begin{proof}
For each $i=0,\ldots,N-1$, define $\Delta \YcbZc_{i} := \YcZc_i - \YcbZc_{i}$.
Subtracting \eqref{eq-martingale representation theorem of ycbzc} from \eqref{eq-deep backward scheme}, we obtain
\eqlnostar{eq-DBS Y l2 error intermediate 1}{\Delta \YcbZc_{i+1} = &\; \Delta \YcbZc_{i} - \feps\left(X_{i+1},\YcZc_{i+1},\Zc_i\right)h + \Zc_i\Delta W_i + \Eb_i\left[\feps\left(X_{i+1},\YcZc_{i+1},\Zc_i\right)h\right] - \int_{t_i}^{t_{i+1}}\Zct_s\dd W_s.}
Since $\Delta \YcbZc_{i+1}$ is orthogonal to $\int_{t_i}^{t_{i+1}}\Zc_i - \Zct_s\dd W_s$,  we square both sides and take expectation to obtain
\eqlnostar{}{\Eb\left|\Delta \YcbZc_{i+1}\right|^2 & + \Eb\int_{t_i}^{t_{i+1}}\left|\Zc_i - \Zct_s\right|^2\dd s \\
= & \Eb\left|\Delta \YcbZc_{i} - \feps\left(X_{i+1},\YcZc_{i+1},\Zc_i\right)h + \Eb_i\left[\feps\left(X_{i+1},\YcZc_{i+1},\Zc_i\right)h\right]\right|^2\\
\le &\; \Big(1+\frac{h}{\gamma_0}\Big)\Eb\left|\Delta \YcbZc_{i}\right|^2 + \left(1+\frac{\gamma_0}{h}\right)\Eb\left|\Eb_i\left[\feps\left(X_{i+1},\YcZc_{i+1},\Zc_i\right)h\right] - \feps\left(X_{i+1},\YcZc_{i+1},\Zc_i\right)h\right|^2 \\
\le &\; \Big(1+\frac{h}{\gamma_0}\Big)\Eb\left|\Delta \YcbZc_{i}\right|^2 + \left(1+\frac{\gamma_0}{h}\right)h^2\Eb\Big|\Eb_i\left[\feps\left(X_{i+1},\YcZc_{i+1},\Zc_i\right)\right] - \feps\left(X_{i},\Eb_i\left[\YcbZc_{i+1}\right],\Zc_i\right) \\
& + \feps\left(X_{i},\Eb_i\left[\YcbZc_{i+1}\right],\Zc_i\right) - \feps\left(X_{i+1},\YcZc_{i+1},\Zc_i\right)\Big|^2\\
\le &\; \Big(1+\frac{h}{\gamma_0}\Big)\Eb\left|\Delta \YcbZc_{i}\right|^2 + 2\gamma_0 h\Big(C h + 4C_f^2\Eb\left|\YcZc_{i+1} - \Eb_i\Big[\YcbZc_{i+1}\Big]\right|^2 \\
& + 4C_{\phi}^2\eps^{-2}\overline{\kappa}_{\infty}^2\Eb\left|\YcZc_{i+1} - \Eb_i\left[\YcbZc_{i+1}\right]\right|^2 + C\eps^{-2}h\Big) \\
\le &\; \Big(1+\frac{h}{\gamma_0}\Big)\Eb\left|\Delta \YcbZc_{i}\right|^2 + 8\gamma_0\left(C_f^2+C_{\phi}^2\eps^{-2}\overline{\kappa}_{\infty}^2\right)h\Big(C h + \Eb\left|\YcZc_{i+1} - \Eb_i\left[\YcbZc_{i+1}\right]\right|^2\Big) \\
\le &\; \Big(1+\frac{h}{\gamma_0}\Big)\Eb\left|\Delta \YcbZc_{i}\right|^2 + 8\gamma_0\left(C_f^2+C_{\phi}^2\eps^{-2}\overline{\kappa}_{\infty}^2\right)h\Big(C h + \Eb\Big|\YcZc_{i+1} - \YcbZc_{i+1} + \int_{t_i}^{t_{i+1}}\Zct_s\dd W_s\Big|^2\Big)\\
\label{eq-DBS Y l2 error intermediate 2}
\le &\; \Big(1+\frac{h}{\gamma_0}\Big)\Eb\left|\Delta \YcbZc_{i}\right|^2 + 16\gamma_0\left(C_f^2+C_{\phi}^2\eps^{-2}\overline{\kappa}_{\infty}^2\right)h\Big(C h + \Eb\int_{t_i}^{t_{i+1}}\left|\Zct_s\right|^2\dd s + \Eb\left|\Delta \YcbZc_{i+1}\right|^2\Big),
}
with $h < \gamma_0$, where the first inequality follows from Young's inequality, the third from the Lipschitz continuity of the generator $f$, the time regularity of $X$, the $1/2$-H\"older continuity of $f$ and $S$ in $t$, together with the regularity of $\phieps$ established in \autoref{lemma:regularity of phi^epsilon}, and the fifth from \eqref{eq-martingale representation theorem of ycbzc org}.

Now, rearranging terms yields
\eqlnostar{}{& \left(1-16\gamma_0\left(C_f^2+C_{\phi}^2\eps^{-2}\overline{\kappa}_{\infty}^2\right)h\right)\Eb\left|\Delta \YcbZc_{i+1}\right|^2 + \Eb\int_{t_i}^{t_{i+1}}\left|\Zc_i - \Zct_s\right|^2\dd s\label{eq-DBS Y l2 error intermediate 5}\\
\le &\; \Big(1+\frac{h}{\gamma_0}\Big)\Eb\left|\Delta \YcbZc_{i}\right|^2 + C\gamma_0\left(C_f^2+C_{\phi}^2\eps^{-2}\overline{\kappa}_{\infty}^2\right)h^2 + 16\gamma_0\left(C_f^2+C_{\phi}^2\eps^{-2}\overline{\kappa}_{\infty}^2\right)h\Eb\int_{t_i}^{t_{i+1}}\left|\Zct_s\right|^2\dd s.
}
Since $\left(1-16\gamma_0\left(C_f^2+C_{\phi}^2\eps^{-2}\overline{\kappa}_{\infty}^2\right)h\right)\left(1+32\gamma_0\left(C_f^2+C_{\phi}^2\eps^{-2}\overline{\kappa}_{\infty}^2\right)h\right)\ge1$ whenever \\
$32\gamma_0\left(C_f^2 +C_{\phi}^2\eps^{-2}\overline{\kappa}_{\infty}^2\right)h\le 1$, we obtain that, for small enough $h$,
\eqlnostar{}{\Eb\left|\Delta \YcbZc_{i+1}\right|^2 & + \left(1+32\gamma_0\left(C_f^2+C_{\phi}^2\eps^{-2}\overline{\kappa}_{\infty}^2\right)h\right)\Eb\int_{t_i}^{t_{i+1}}\left|\Zc_i - \Zct_s\right|^2\dd s\\
\le &\; \left(1+C_1 h\right)\Eb\left|\Delta \YcbZc_{i}\right|^2 + C\left(1+32\gamma_0\left(C_f^2+C_{\phi}^2\eps^{-2}\overline{\kappa}_{\infty}^2\right)h\right)\gamma_0\left(C_f^2+C_{\phi}^2\eps^{-2}\overline{\kappa}_{\infty}^2\right)h^2\\
\label{eq-DBS Y l2 error intermediate 3}
& + 16\left(1+32\gamma_0\left(C_f^2+C_{\phi}^2\eps^{-2}\overline{\kappa}_{\infty}^2\right)h\right)\gamma_0\left(C_f^2+C_{\phi}^2\eps^{-2}\overline{\kappa}_{\infty}^2\right)h\Eb\int_{t_i}^{t_{i+1}}\left|\Zct_s\right|^2\dd s,}
where
$C_1:=\frac{1}{\gamma_0}
+32\gamma_0\left(C_f^2+C_{\phi}^2\eps^{-2}\overline{\kappa}_{\infty}^2\right)$, as defined in \eqref{eq-C1}.

Then, by applying Gronwall's lemma (e.g., \citet[Proposition 3.2]{emmrich1999discrete}), we obtain that
\eqlnostar{}{
\underset{0\leq i\leq N-1}{\max}\,\Eb\left|\Delta \YcbZc_{i}\right|^2 \le &\; \exp(C_1 T)\Big(\Eb\left|\Delta \YcbZc_{0}\right|^2 + C\left(1+\eps^{-2}\right)h + C\left(1+\eps^{-2}\right)h\Eb\int_{0}^{T}\left|\Zct_s\right|^2\dd s\Big)\\
\label{eq-DBS Y l2 error intermediate 4}
\le &\; \exp(C_1 T)\Var\left(\YcZc_{0}\right) + C\exp(C_1 T)\eps^{-2}h.
}
Summing \eqref{eq-DBS Y l2 error intermediate 5} over $i=0,\ldots,N-1$ gives
\eqlnostar{}{& \left(1-16\gamma_0\left(C_f^2+C_{\phi}^2\eps^{-2}\overline{\kappa}_{\infty}^2\right)h\right) \sum_{i=0}^{N-1}\Eb\left|\Delta \YcbZc_{i+1}\right|^2 + \sum_{i=0}^{N-1}\Eb\int_{t_i}^{t_{i+1}}|\Zc_i - \Zct_s|^2\dd s\\
& \qquad \le \; \left(1+\frac{h}{\gamma_0}\right)\sum_{i=0}^{N-1}\Eb\left|\Delta \YcbZc_{i}\right|^2 + C\gamma_0\left(C_f^2+C_{\phi}^2\eps^{-2}\overline{\kappa}_{\infty}^2\right)h \\
&\qquad \quad + 16\gamma_0\left(C_f^2+C_{\phi}^2\eps^{-2}\overline{\kappa}_{\infty}^2\right)h\sum_{i=0}^{N-1}\Eb\int_{t_i}^{t_{i+1}}\left|\Zct_s\right|^2\dd s.
}
Therefore, by rearranging terms and applying \eqref{eq-DBS Y l2 error intermediate 4}, we obtain
\eqlnostar{}{
\sum_{i=0}^{N-1}\Eb\int_{t_i}^{t_{i+1}}|\Zc_i - \Zct_s|^2\dd s
\le & \; \Var\left( \YcZc_{0}\right) + \left(\frac{1}{\gamma_0}
+16\gamma_0\left(C_f^2+C_{\phi}^2\eps^{-2}\overline{\kappa}_{\infty}^2\right)\right)h\sum_{i=0}^{N-1}\Eb\left|\Delta \YcbZc_{i}\right|^2 + C\eps^{-2}h\\
\le & \; C_2\Var\left(\YcZc_{0}\right) + C\exp(C_1 T)\eps^{-2}h,
}
where
$C_2 := 1 + \exp(C_1 T)\left(\frac{1}{\gamma_0}
+16\gamma_0\left(C_f^2+C_{\phi}^2\eps^{-2}\overline{\kappa}_{\infty}^2\right)\right)$, as define in \eqref{eq-C2}.
\end{proof}

The following lemma will be used to control the error between $\YcbZc$ and $\Yeps$. 
\begin{lemma}\label{lemma:inequality for phi}
Assume that $h\overline{\kappa}_{\infty} C_{\phi}\eps^{-1}\leq 1,$ and \ref{hp-2} holds true, then $\forall t\in[0,T]$ and $\forall y\geq y'$, $-(y - y')\leq h\phieps(y - S_t)\kappa_t - h\phieps(y' - S_t)\kappa_t\leq 0,$ a.s.
\end{lemma}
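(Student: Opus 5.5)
The plan is to fix $t\in[0,T]$ and $\omega$ outside a null set on which $0\le\kappa_t(\omega)\le\overline{\kappa}_{\infty}$, and then reduce everything to two elementary properties of $\phieps$. Nonnegativity of $\kappa_t$ holds everywhere, since $\kappa_t$ is defined in \eqref{eq:kappa} as a negative part. The bound $\kappa_t\le\overline{\kappa}_{\infty}$ holds a.s. by the definition of $\overline{\kappa}_{\infty}$ in \ref{hp-2} as an essential supremum. With $\omega$ frozen, $S_t$ is just a real number, so the claim becomes a deterministic statement about the function $x\mapsto h\kappa_t\phieps(x-S_t)$ evaluated at the two points $y\ge y'$.

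\textbf{Upper bound.} Since $\phi$ is non-increasing, so is $\phieps(\cdot)=\phi(\cdot/\eps)$. Hence $y\ge y'$ gives $\phieps(y-S_t)\le\phieps(y'-S_t)$. Multiplying by $h\kappa_t\ge0$ shows that the middle quantity is $\le 0$.

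\textbf{Lower bound.} If $y=y'$, the middle quantity is $0$ and there is nothing to prove. Otherwise, set $x:=y'-S_t$ and $\delta:=y-y'>0$, and apply \autoref{lemma:regularity of phi^epsilon} with increment $\delta$:
\begin{align}
\phieps(y'-S_t)-\phieps(y-S_t)=\left|\phieps(x+\delta)-\phieps(x)\right|\le C_{\phi}\eps^{-1}(y-y').
\end{align}
Multiplying by $h\kappa_t\in[0,h\overline{\kappa}_{\infty}]$ and using the hypothesis $h\overline{\kappa}_{\infty}C_{\phi}\eps^{-1}\le1$ gives
\begin{align}
h\phieps(y'-S_t)\kappa_t-h\phieps(y-S_t)\kappa_t\le h\overline{\kappa}_{\infty}C_{\phi}\eps^{-1}(y-y')\le y-y'.
\end{align}
This is exactly the left inequality after a sign change.

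I expect no real obstacle; the argument is two lines once set up. The only points needing care are the almost-sure qualifier, which comes from $\overline{\kappa}_{\infty}$ being an essential supremum, and the fact that $S_t$ is random. The latter is harmless because the estimate is applied pathwise: \autoref{lemma:regularity of phi^epsilon} is a deterministic Lipschitz bound that does not depend on the base point $x$. If one wanted the statement uniformly in $t$ outside a single null set, one would note that the essential supremum is taken over $(t,\omega)$ jointly, so the bound holds $\mathrm{d}t\otimes\Pb$-a.e., which suffices for the later use inside time integrals.
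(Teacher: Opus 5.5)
Your proof is correct and follows the paper's argument: monotonicity of $\phieps$ gives the upper bound, and the Lipschitz bound of \autoref{lemma:regularity of phi^epsilon}, multiplied by $h\kappa_t\in[0,h\overline{\kappa}_{\infty}]$ and combined with $h\overline{\kappa}_{\infty}C_{\phi}\eps^{-1}\le 1$, gives the lower bound; your explicit use of $\kappa_t\ge 0$ fills in a step the paper leaves implicit. One correction to your closing remark: the lemma is later applied at fixed grid times $t_{i+1}$, not inside time integrals, so a $\mathrm{d}t\otimes\Pb$-a.e.\ bound alone would not suffice there. The gap is closed by \ref{hp-2}: the $\Lb^2$-H\"older continuity of $S,U,V$, together with that of $X$ and the Lipschitz and H\"older properties of $f$, makes $t\mapsto\kappa_t$ continuous in $\Lb^2$. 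Passing to an a.s.\ convergent subsequence along times in the full-measure set where $\kappa_t\le\overline{\kappa}_{\infty}$ a.s.\ then extends that bound to every fixed $t$.
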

\begin{proof}
Since $\phi^{\varepsilon}$ is non-increasing and its first order derivative is bounded by $C_{\phi}\eps^{-1}$, we have that $-C_{\phi}\eps^{-1}(y-y')\leq \phi^{\varepsilon}(y - S_t) - \phi^{\varepsilon}(y' - S_t)\leq 0$. Then we have the result since $\kappa_t\leq \overline{\kappa}_{\infty}, \forall t\in[0,T]$ a.s. by \ref{hp-2} and $h\overline{\kappa}_{\infty} C_{\phi}\eps^{-1}\leq 1$.
\end{proof}

\begin{theorem}\label{thm:deep backward scheme approx error}
Assume that $h\overline{\kappa}_{\infty} \frac{C_{\phi}}{\eps}\leq 1$, and \ref{hp-1}-\ref{hp-2} hold true. For some $\gamma_1 > h$ such that $1-40\gamma_1 C_f^2 > 0$, we have
\eqlnostar{}{\underset{0\leq i\leq N-1}{\max}\,\Eb\left|\YcbZc_{i} - \Yeps_{t_i}\right|^2 \le &\; C\exp(C_1 T + C_3 T)\eps^{-4}h\\
& + 20\gamma_1\exp(C_3 T)\left(\left(C_f^2+C_{\phi}^2\eps^{-2}\overline{\kappa}_{\infty}^2\right)\exp(C_1 T) + 2C_f^2 C_2\right)\Var\left(\YcZc_0\right),\\
\Eb\int_{0}^{T}\left|\Zct_t - \Zeps_t\right|^2\dd t
\le & \; C\exp(C_1 T + C_3 T)\eps^{-4}h + C_4\Var\left(\YcZc_{0}\right).} 
Here $C_1$ and $C_2$ are defined in \eqref{eq-C1} and \eqref{eq-C2}, respectively, and
\eqlnostar{eq-C3}{C_3 := &\; \frac{1}{\gamma_1} + 10C_f^2\gamma_1,\\
C_4 := &\; \left(1-40\gamma_1 C_f^2\right)^{-1}\Big(20\exp(C_3 T)\gamma_1\left(\left(C_f^2+C_{\phi}^2\eps^{-2}\overline{\kappa}_{\infty}^2\right)\exp(C_1 T) + C_f^2 C_2\right) \\
\label{eq-C4}
& + 20\left(C_f^2+C_{\phi}^2\eps^{-2}\overline{\kappa}_{\infty}^2\right)\gamma_1 \exp(C_1 T) + 40\gamma_1 C_f^2C_2\Big).}
\end{theorem}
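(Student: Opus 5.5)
We compare the adapted projection $\YcbZc$ with the exact solution $\Yeps$ on the grid, following the classical one-step error analysis for BSDE time discretizations (as in \citet[Theorem 5.3]{zhang2004numerical}). Set $\delta Y_i := \YcbZc_i - \Yeps_{t_i}$, which is $\Fc_{t_i}$-measurable, and $\delta Z_s := \Zct_s - \Zeps_s$.

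\textbf{One-step error identity.} Subtract the integral form of \eqref{eq-ReRBSDE} on $[t_i,t_{i+1}]$ from \eqref{eq-martingale representation theorem of ycbzc}. This gives
\[
\delta Y_i + \int_{t_i}^{t_{i+1}}\delta Z_s\dd W_s = \delta Y_{i+1} + \Eb_i\big[\feps(t_{i+1},X_{t_{i+1}},\YcZc_{i+1},\Zc_i)h\big] - \int_{t_i}^{t_{i+1}}\feps(s,X_s,\Yeps_s,\Zeps_s)\dd s .
\]
Since $\delta Y_i$ is $\Fc_{t_i}$-measurable, it is orthogonal to the stochastic integral. Squaring and taking expectations therefore gives $\Eb|\delta Y_i|^2+\Eb\int_{t_i}^{t_{i+1}}|\delta Z_s|^2\dd s$ on the left. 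This orthogonality is the reason for working with $\YcbZc$ rather than $\YcZc$.

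\textbf{Splitting the driver.} On the right we split the generator as $\feps = f + \phieps(\cdot-S)\kappa$.

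For the $f$-part we use the Lipschitz property of $f$ and the Young inequality $(a+b)^2\le(1+h/\gamma_1)a^2+(1+\gamma_1/h)b^2$. The difference is then bounded by $C_f$ times $|\YcZc_{i+1}-\Yeps_{t_{i+1}}|+|\Zc_i-\Zeps_s|$, plus time-regularity terms. The time-regularity terms come from the $\tfrac12$-Hölder continuity of $f$ and $S$, the regularity of $X$ and $\Yeps$, and the $L^2$-regularity of $\Zeps$. For $\Zeps$ we need the path-regularity estimate $\sum_i\Eb\int_{t_i}^{t_{i+1}}|\Zeps_s-\bar Z_i|^2\dd s\le C\eps^{-2}h$ or similar. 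Since the driver has $y$-Lipschitz constant of order $\eps^{-1}$, this estimate, together with the $\eps^{-1}$-slope of $\phieps$, is where the $\eps^{-4}h$ factor should come from.

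We then decompose each error further:
\begin{itemize}
\item $\YcZc_{i+1}-\Yeps_{t_{i+1}} = \Delta\YcbZc_{i+1} + \delta Y_{i+1}$;
\item $\Zc_i - \Zeps_s = (\Zc_i-\Zct_s)+\delta Z_s$.
\end{itemize}
The $\Delta\YcbZc$ and $\Zc_i-\Zct$ pieces are bounded by \autoref{lemma:estimate on L2 projection residual} in terms of $\Var(\YcZc_0)$ and $C\exp(C_1T)\eps^{-2}h$. This produces the factors $\exp(C_1T)$ and $C_2$ in front of $\Var(\YcZc_0)$. The $\delta Z_s$ piece appears with coefficient of order $\gamma_1 C_f^2$ (the constants $10$ and $40$ come from expanding squares of five terms). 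It is absorbed into the left-hand side using $1-40\gamma_1C_f^2>0$.

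For the penalization part we avoid paying $\eps^{-2}$ inside the Gronwall constant. The piece $h[\phieps(\Yeps_{t_{i+1}}+\delta Y_{i+1}-S)-\phieps(\Yeps_{t_{i+1}}-S)]\kappa$ is handled with \autoref{lemma:inequality for phi}, using the hypothesis $h\overline{\kappa}_\infty C_\phi\eps^{-1}\le1$. By that lemma, the map $y\mapsto y+h\phieps(y-S)\kappa$ is non-decreasing and $1$-Lipschitz. Hence $|\delta Y_{i+1}+h(\text{penalty difference})|\le|\delta Y_{i+1}|$, so this piece merges with $\delta Y_{i+1}$ without enlarging its coefficient. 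This is why $C_3=1/\gamma_1+10C_f^2\gamma_1$ contains no $\eps$. The remaining penalty differences, due to $\Delta\YcbZc_{i+1}$ and time shifts, are bounded by Lemma \ref{lemma:regularity of phi^epsilon}. They carry the factor $C_\phi^2\eps^{-2}\overline\kappa_\infty^2$ multiplying $\exp(C_1T)\Var(\YcZc_0)$.

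\textbf{Recursion and Gronwall.} Altogether we obtain
\[
\Eb|\delta Y_i|^2+(1-40\gamma_1C_f^2)\,\Eb\int_{t_i}^{t_{i+1}}|\delta Z|^2 \le (1+C_3h)\,\Eb|\delta Y_{i+1}|^2 + h\,R_i ,
\]
where $\sum_i hR_i$ is bounded by $C\exp(C_1T)\eps^{-4}h$ plus the $\Var(\YcZc_0)$ terms above. The recursion starts from $\delta Y_N=0$. A backward discrete Gronwall argument gives the bound on $\max_i\Eb|\delta Y_i|^2$. Summing the recursion over $i$ and inserting this $\max$ bound gives the $\delta Z$ estimate with the constant $C_4$.

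The main obstacle is making the penalty term merge monotonically with $\delta Y_{i+1}$ rather than through Lipschitz bounds, since a Lipschitz bound would put $\eps^{-2}$ into $C_3$. The second delicate point is the time-regularity estimate for $\Zeps$ together with its $\eps$-dependence.
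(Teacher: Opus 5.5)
Your proposal follows the paper's proof essentially step for step: the one-step identity for $\YcbZc_i-\Yeps_{t_i}$ with orthogonality, Young's inequality with weights $(1+h/\gamma_1)$ and $(1+\gamma_1/h)$, and the monotone merge of the penalty difference into $\delta Y_{i+1}$ via \autoref{lemma:inequality for phi}, which is exactly why $C_3$ is $\eps$-free; then the five-term split absorbed using $1-40\gamma_1C_f^2>0$, \autoref{lemma:estimate on L2 projection residual}, backward Gronwall, and summation for the $Z$-bound. Three attributions need correcting: (i) path regularity of $\Zeps$ is not needed here, since your own split $\Zc_i-\Zeps_s=(\Zc_i-\Zct_s)+\delta Z_s$ bypasses it, and the paper uses it only in \autoref{thm:controlofvar}; (ii) the $\eps^{-4}h$ term comes from the coefficient $C_\phi^2\eps^{-2}\overline{\kappa}_\infty^2$ multiplying the $C\exp(C_1T)\eps^{-2}h$ remainder of \autoref{lemma:estimate on L2 projection residual}; and (iii) the conditional-expectation time shifts of the penalty are controlled through increments of $\YcbZc$, hence through $\Eb\int_{t_i}^{t_{i+1}}|\Zct_s|^2\dd s$ (the paper's terms $I_i, II_i$), rather than through the regularity of $\Yeps$ alone, because comparing with $\Yeps_{t_i}$ would put $\eps^{-2}$ back into the Gronwall constant.
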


\begin{proof}
For each $i=0,\ldots,N-1$, define 
$\Delta \Ycb_{i} := \YcbZc_i - \Yeps_{i}$, and $\Delta \Zct_t := \Zct_{t} - \Zeps_t$. Taking the difference between $\YcbZc$ and $\Yeps$, we have
\eqlnostar{}{\Delta \Ycb_{i} = \Delta \Ycb_{i+1} + \int_{t_i}^{t_{i+1}}\Eb_i\left[\feps\left(X_{i+1},\YcZc_{i+1},\Zc_i\right)\right] - \feps\left(X_s,\Yeps_s,\Zeps_s\right)\dd s - \int_{t_i}^{t_{i+1}}\Delta \Zct_s\dd W_s.}
Since $\Delta \Ycb_i$ is $\Fc_{t_i}$-measurable, we rearrange the terms, square both sides, and take expectations to obtain
\eqlnostar{}{\Eb\left|\Delta \Ycb_{i}\right|^2 + \Eb\int_{t_i}^{t_{i+1}}|\Delta \Zct_s|^2\dd s = \Eb\Big|\Delta \Ycb_{i+1} + \int_{t_i}^{t_{i+1}}\Eb_i\left[\feps\left(X_{i+1},\YcZc_{i+1},\Zc_i\right)\right] - \feps\left(X_s,\Yeps_s,\Zeps_s\right)\dd s\Big|^2.}
Then, it can be derived that
\eqlnostar{}{\Eb\left|\Delta \Ycb_{i}\right|^2 & + \Eb\int_{t_i}^{t_{i+1}}|\Delta \Zct_s|^2\dd s \le \Eb\Big(\left|\Delta \Ycb_{i+1} + \phieps\left(\YcbZc_{i+1} - S_{i+1}\right)\kappa_{i+1}h - \phieps(\Yeps_{i+1} - S_{i+1})\kappa_{i+1}h\right| \\
& + \Big|- \phieps\left(\YcbZc_{i+1} - S_{i+1}\right)\kappa_{i+1}h + \phieps(\Yeps_{i+1} - S_{i+1})\kappa_{i+1}h \\
& + \int_{t_i}^{t_{i+1}}\Eb_i\left[\feps\left(X_{i+1},\YcZc_{i+1},\Zc_i\right)\right] - \feps\left(X_s,\Yeps_s,\Zeps_s\right)\dd s\Big|\Big)^2\\
\le & \; \left(1+\frac{h}{\gamma_1}\right)\Eb\left|\Delta \Ycb_{i+1} + \phieps\left(\YcbZc_{i+1} - S_{i+1}\right)\kappa_{i+1}h - \phieps(\Yeps_{i+1} - S_{i+1})\kappa_{i+1}h\right|^2 \\
& + \left(1+\frac{\gamma_1}{h}\right)\Eb\Big|- \phieps\left(\YcbZc_{i+1} - S_{i+1}\right)\kappa_{i+1}h
+ \phieps(\Yeps_{i+1} - S_{i+1})\kappa_{i+1}h \\
& + \int_{t_i}^{t_{i+1}}\Eb_i\left[\feps\left(X_{i+1},\YcZc_{i+1},\Zc_i\right)\right] - \feps\left(X_s,\Yeps_s,\Zeps_s\right)\dd s\Big|^2\\
\le & \; \left(1+\frac{h}{\gamma_1}\right)\Eb\left|\Delta \Ycb_{i+1}\right|^2 + \left(1+\frac{\gamma_1}{h}\right)\Eb\Big|- \phieps\left(\YcbZc_{i+1} - S_{i+1}\right)\kappa_{i+1}h
+ \phieps(\Yeps_{i+1} - S_{i+1})\kappa_{i+1}h \\
& + \int_{t_i}^{t_{i+1}}\Eb_i\left[\feps\left(X_{i+1},\YcZc_{i+1},\Zc_i\right)\right] - \feps\left(X_s,\Yeps_s,\Zeps_s\right)\dd s\Big|^2\\
\le & \; \left(1+\frac{h}{\gamma_1}\right)\Eb\left|\Delta \Ycb_{i+1}\right|^2 + \left(1+\frac{\gamma_1}{h}\right)\Eb\Big|- \phieps\left(\YcbZc_{i+1} - S_{i+1}\right)\kappa_{i+1}h
+ \phieps(\Yeps_{i+1} - S_{i+1})\kappa_{i+1}h \\
& + \feps\left(X_{i},\YcbZc_{i},\Zc_i\right)h - \feps\left(X_{i},\YcbZc_{i},\Zc_i\right)h + f\left(X_{i+1},\YcbZc_{i+1},\Zc_i\right)h - f\left(X_{i+1},\YcbZc_{i+1},\Zc_i\right)h\\
& + \Eb_i\left[\feps\left(X_{i+1},\YcbZc_{i+1},\Zc_i\right)h - \feps\left(X_{i+1},\YcbZc_{i+1},\Zc_i\right)h\right] + f\left(X_{i+1},\Yeps_{i+1},\Zc_i\right)h \\
& - f\left(X_{i+1},\Yeps_{i+1},\Zc_i\right)h + \int_{t_i}^{t_{i+1}}\Eb_i\left[\feps\left(X_{i+1},\YcZc_{i+1},\Zc_i\right)\right] - \feps\left(X_s,\Yeps_s,\Zeps_s\right)\dd s\Big|^2\\
\le & \; \left(1+\frac{h}{\gamma_1}\right)\Eb\left|\Delta \Ycb_{i+1}\right|^2 + 5\left(1+\frac{\gamma_1}{h}\right)\Eb\Big[\left|\feps\left(X_{i},\YcbZc_{i},\Zc_i\right) - \feps\left(X_{i+1},\YcbZc_{i+1},\Zc_i\right)\right|^2 h^2 \\
& + \Eb_i\left|\feps\left(X_{i+1},\YcbZc_{i+1},\Zc_i\right) - \feps\left(X_{i},\YcbZc_{i},\Zc_i\right)\right|^2 h^2 \\
& + \left|f\left(X_{i+1},\YcbZc_{i+1},\Zc_i\right) - f\left(X_{i+1},\Yeps_{i+1},\Zc_i\right)\right|^2 h^2\\
& + \Eb_i\left|\feps\left(X_{i+1},\YcZc_{i+1},\Zc_i\right) - \feps\left(X_{i+1},\YcbZc_{i+1},\Zc_i\right)\right|^2 h^2 \\
& + \left|\int_{t_i}^{t_{i+1}}\left(\feps\left(X_{i+1},\Yeps_{i+1},\Zc_i\right) - \feps\left(X_s,\Yeps_s,\Zeps_s\right)\right)\dd s\right|^2\Big]\\
\le & \; \left(1+C_3 h\right)\Eb\left|\Delta \Ycb_{i+1}\right|^2 + 10\frac{\gamma_1}{ h}\Big(\underbrace{\Eb\left|\feps\left(X_{i},\YcbZc_{i},\Zc_i\right) - \feps\left(X_{i+1},\YcbZc_{i+1},\Zc_i\right)\right|^2}_{{I}_i} h^2 \\
& + \underbrace{\Eb\left|\feps\left(X_{i+1},\YcbZc_{i+1},\Zc_i\right) - \feps\left(X_{i},\YcbZc_{i},\Zc_i\right)\right|^2}_{{II}_i} h^2 \\
& + \underbrace{\Eb\left|\feps\left(X_{i+1},\YcZc_{i+1},\Zc_i\right) - \feps\left(X_{i+1},\YcbZc_{i+1},\Zc_i\right)\right|^2}_{{III}_i} h^2\\
\label{eq-DBS approx error intermediate 1}
& + \underbrace{\Eb\left|\int_{t_i}^{t_{i+1}}\left(\feps\left(X_{i+1},\Yeps_{i+1},\Zc_i\right) - \feps\left(X_s,\Yeps_s,\Zeps_s\right)\right)\dd s\right|^2}_{{IV}_i}\Big),}
where
$C_3 := \frac{1}{\gamma_1} + 10C_f^2\gamma_1$ as defined in \eqref{eq-C3}, with $h < \gamma_1$. The second inequality follows from Young's inequality, the third from \autoref{lemma:inequality for phi}, the fifth from a rearrangement of terms, and the sixth from Jensen's inequality together with the Lipschitz continuity of the generator $f$.

We next show that the terms ${I}_i, {II}_i, {III}_i, {IV}_i$ in \eqref{eq-DBS approx error intermediate 1} satisfy:
\eqlnostar{}{& {I}_i\le C\eps^{-2}h + C\eps^{-2}\Eb\int_{t_i}^{t_{i+1}}\left|\Zct_s\right|^2\dd s,\\
& {II}_i\le C\eps^{-2}h + C\eps^{-2}\Eb\int_{t_i}^{t_{i+1}}\left|\Zct_s\right|^2\dd s,\\
& {III}_i\le 2\left(C_f^2+C_{\phi}^2\eps^{-2}\overline{\kappa}_{\infty}^2\right)\Eb\left|\YcZc_{i+1} - \YcbZc_{i+1}\right|^2, \\
& {IV}_i\le 4C_f^2 h \Eb\int_{t_i}^{t_{i+1}}\Big(\left|\Zc_i - \Zct_s\right|^2 + \left|\Zct_s - \Zeps_s\right|^2\Big)\dd s + C\left(h^3 + \eps^{-2}h^{3}\right).
}
To derive the bounds for ${I}_i$ and ${II}_i$, note that 
\eqlnostar{}{& \Eb\left|\feps\left(X_{i+1},\YcbZc_{i+1},\Zc_i\right) - \feps\left(X_{i},\YcbZc_{i},\Zc_i\right)\right|^2 \\
\le &\; 2\Eb\Big(\left|f\left(X_{i+1},\YcbZc_{i+1},\Zc_i\right) - f\left(X_{i},\YcbZc_{i},\Zc_i\right)\right|^2 + \left|\phieps\left(\YcbZc_{i+1}-S_{i+1}\right)\kappa_{i+1} - \phieps\left(\YcbZc_{i}-S_{i}\right)\kappa_i\right|^2\Big) \\
\le &\; \Eb\Big(2\left|f\left(X_{i+1},\YcbZc_{i+1},\Zc_i\right) - f\left(X_{i},\YcbZc_{i},\Zc_i\right)\right|^2 + 4\left|\phieps\left(\YcbZc_{i+1}-S_{i+1}\right) - \phieps\left(\YcbZc_{i}-S_{i}\right)\right|^2|\kappa_{i+1}|^2\Big) \\
& + 4\Eb\Big(\left|\phieps\left(\YcbZc_{i}-S_{i}\right)\right|^2|\kappa_{i+1} - \kappa_{i}|^2\Big) \\
\le &\, \Eb\Big(4C_f^2\left|X_{i+1} - X_{i}\right|^2 + 4C_f^2\left|\YcbZc_{i+1} - \YcbZc_{i}\right|^2 + 8C_{\phi}^2\eps^{-2}\overline{\kappa}_{\infty}^2\left|\YcbZc_{i+1} - \YcbZc_{i}\right|^2 + 8C_{\phi}^2\eps^{-2}\overline{\kappa}_{\infty}^2\left|S_{i+1} - S_{i}\right|^2\Big)\\
& + 4C_f^2\Eb\Big(|X_{i+1} - X_{i}|^2 + |S_{i+1} - S_{i}|^2 + |V_{i+1} - V_{i}|^2\Big) + 4\Eb|U_{i+1} - U_{i}|^2 \\
\label{eq-DBS approx error intermediate 7}
\le &\, C(1+\eps^{-2})h + \left(4C_f^2 + 8C_{\phi}^2\eps^{-2}\overline{\kappa}_{\infty}^2\right)\Eb\left|\YcbZc_{i+1} - \YcbZc_{i}\right|^2\\
\le &\, C(1+\eps^{-2})h + \left(4C_f^2 + 8C_{\phi}^2\eps^{-2}\overline{\kappa}_{\infty}^2\right)\Eb\Big(\left|\feps\left(X_{t_{i+1}},\YcZc_{i+1},\Zc_i\right) h\right|^2 + \int_{t_i}^{t_{i+1}}\left|\Zct_s\right|^2\Big)\dd s\\
\le &\, C\eps^{-2}h + \left(4C_f^2 + 8C_{\phi}^2\eps^{-2}\overline{\kappa}_{\infty}^2\right)\Eb\int_{t_i}^{t_{i+1}}\left|\Zct_s\right|^2\dd s,}
where the third inequality follows from the Lipschitz continuity of the generator $f$, together with the regularity of $\phieps$ established in \autoref{lemma:regularity of phi^epsilon}, the fourth from the regularity of $X$, $S$, $U$ and $V$, and the fifth from \eqref{eq-martingale representation theorem of ycbzc}. The estimate for ${III}_i$ can be derived in a similar manner. As for ${IV}_i$, we have
\eqlnostar{}{& \Eb\Big|\int_{t_i}^{t_{i+1}}\Big(\feps\left(X_{i+1},\Yeps_{i+1},\Zc_i\right) - \feps\left(X_s,\Yeps_s,\Zeps_s\right)\Big)\dd s\Big|^2\\
\le &\; 2\Eb\Big|\int_{t_i}^{t_{i+1}}\left(\feps\left(X_{i+1},\Yeps_{i+1},\Zc_i\right) - \feps\left(X_{i+1},\Yeps_{i+1},\Zeps_s\right)\right)\dd s\Big|^2\\
& + 2\Eb\Big|\int_{t_i}^{t_{i+1}}\left(\feps\left(X_{i+1},\Yeps_{i+1},\Zeps_s\right) - \feps\left(X_s,\Yeps_s,\Zeps_s\right)\right)\dd s\Big|^2\\
\le &\; 4C_f^2 h\Eb\int_{t_i}^{t_{i+1}}\left|\Zc_i - \Zct_s\right|^2 + \left|\Zct_s - \Zeps_s\right|^2\dd s + 4h\Eb\int_{t_i}^{t_{i+1}}\Big(\Big|\left(f\left(X_{i+1},\Yeps_{i+1},\Zeps_s\right) - f\left(X_s,\Yeps_s,\Zeps_s\right)\right)\Big|^2\\
& + \left|\phieps\left(\Yeps_{i+1}-S_{i+1}\right)\kappa_{i+1} - \phieps\left(\Yeps_{s}-S_{s}\right)\kappa_s\right|^2\Big)\dd s \\
\le &\, 4C_f^2 h\Eb\int_{t_i}^{t_{i+1}}\left|\Zc_i - \Zct_s\right|^2 + \left|\Zct_s - \Zeps_s\right|^2\dd s + 8C_f^2 h\Eb\int_{t_i}^{t_{i+1}}\Big(\left|X_{i+1} - X_{s}\right|^2 + \left|\Yeps_{i+1} - \Yeps_{s}\right|^2\Big)\dd s\\
& + 16C_{\phi}^2\eps^{-2}\overline{\kappa}_{\infty}^2h\Eb\int_{t_i}^{t_{i+1}}\Big(\left|\Yeps_{i+1} - \Yeps_{s}\right|^2 + \left|S_{i+1} - S_{s}\right|^2\Big)\dd s \\
& + 8C_f^2h\Eb\int_{t_i}^{t_{i+1}}\Big(|X_{i+1} - X_{s}|^2 + |S_{i+1} - S_{s}|^2 + |V_{i+1} - V_{s}|^2\Big)\dd s + 8h\Eb\int_{t_i}^{t_{i+1}}|U_{i+1} - U_{s}|^2\dd s \\
\le &\; 4C_f^2 h\Eb\int_{t_i}^{t_{i+1}}\left|\Zc_i - \Zct_s\right|^2 + \left|\Zct_s - \Zeps_s\right|^2\dd s + C(h^3 + \eps^{-2}h^3),
}
where the second inequality follows from the Lipschitz continuity of $f$, and the third and fourth inequalities are obtained by arguments analogous to those used for \eqref{eq-DBS approx error intermediate 7}.

Substituting the estimates for ${I}_i, {II}_i, {III}_i, {IV}_i$ into \eqref{eq-DBS approx error intermediate 1}, we obtain
\eqlnostar{}{& \Eb\left|\Delta \Ycb_{i}\right|^2 + \left(1-40\gamma_1 C_f^2\right)\Eb\int_{t_i}^{t_{i+1}}|\Delta \Zct_s|^2\dd s \\
\le & \; \left(1+C_3 h\right)\Eb\left|\Delta \Ycb_{i+1}\right|^2 + C\eps^{-2}h\Eb\int_{t_i}^{t_{i+1}}\left|\Zct_s\right|^2\dd s + C(h+\gamma_1)\left(\eps^{-2}h^2 + h^2\right) \\
\label{eq-DBS approx error intermediate 2}
& + 20\left(C_f^2+C_{\phi}^2\eps^{-2}\overline{\kappa}_{\infty}^2\right)\gamma_1 h \Eb\left|\YcZc_{i+1} - \YcbZc_{i+1}\right|^2 + 40\gamma_1 C_f^2\Eb\int_{t_i}^{t_{i+1}}\left|\Zc_i - \Zct_s\right|^2\dd s.}
Then choose $\gamma_1>0$ such that $1-40\gamma_1 C_f^2 >0$. Applying (backward) Gronwall's lemma (cf. \citet[Theorem 5.3.3]{zhang2017backward}), we obtain
\eqlnostar{}{
& \underset{0\leq i\leq N-1}{\max}\,\Eb\left|\Delta \Ycb_{i}\right|^2\\
\le &\; \exp(C_3 T)\Eb\left|\Delta \Ycb_{N}\right|^2 + C\exp(C_3 T)\left(\eps^{-2}h + h\right) + 40\exp(C_3 T)C_f^2\gamma_1 \sum_{i=0}^{N-1} \Eb\int_{t_i}^{t_{i+1}}\left|\Zc_i - \Zct_s\right|^2\dd s\\
& + 20\exp(C_3 T)\left(C_f^2+C_{\phi}^2\eps^{-2}\overline{\kappa}_{\infty}^2\right)\gamma_1 h\sum_{i=0}^{N-1} \Eb\left|\YcZc_{i+1} - \YcbZc_{i+1}\right|^2\\
\le &\; C\exp(C_3 T)\left(\eps^{-2}h + h\right) + 20\exp(C_3 T)\left(C_f^2+C_{\phi}^2\eps^{-2}\overline{\kappa}_{\infty}^2\right)\gamma_1 h\sum_{i=0}^{N-1} \Eb\left|\YcZc_{i+1} - \YcbZc_{i+1}\right|^2\\
\label{eq-DBS approx error intermediate 3}
& + 40\exp(C_3 T)C_f^2\gamma_1 \sum_{i=0}^{N-1} \Eb\int_{t_i}^{t_{i+1}}\left|\Zc_i - \Zct_s\right|^2\dd s.
}
Now combining results from \autoref{lemma:estimate on L2 projection residual}, we prove that
\eqlnostar{}{\underset{0\leq i\leq N-1}{\max}\,\Eb\left|\Delta \Ycb_{i}\right|^2 \le &\; C\exp(C_1 T + C_3 T)\left(\eps^{-2}h + h + \eps^{-4}h\right)\\
& + 20\gamma_1\exp(C_3 T)\left(\left(C_f^2+C_{\phi}^2\eps^{-2}\overline{\kappa}_{\infty}^2\right)\exp(C_1 T) + 2C_f^2 C_2\right)\Var\left(\YcZc_0\right).}

Summing \eqref{eq-DBS approx error intermediate 2} over $i=0,\ldots,N-1$ gives
\eqlnostar{}{\sum_{i=0}^{N-1}\Eb\left|\Delta \Ycb_{i}\right|^2 & + \left(1-40\gamma_1 C_f^2\right)\sum_{i=0}^{N-1}\Eb\int_{t_i}^{t_{i+1}}|\Delta \Zct_s|^2\dd s
\le \left(1+C_3 h\right)\sum_{i=0}^{N-1}\Eb\left|\Delta \Ycb_{i+1}\right|^2  + C\eps^{-2}h\Eb\int_{0}^{T}\left|\Zct_s\right|^2\dd s\\
& + C\sum_{i=0}^{N-1}\left(\eps^{-2}h^2 + h^2\right) + 20\left(C_f^2+C_{\phi}^2\eps^{-2}\overline{\kappa}_{\infty}^2\right)\gamma_1 h\sum_{i=0}^{N-1}\Eb\left|\YcZc_{i+1} - \YcbZc_{i+1}\right|^2\\
\label{eq-DBS approx error intermediate 4}
& + 40\gamma_1 C_f^2\sum_{i=0}^{N-1}\Eb\int_{t_i}^{t_{i+1}}\left|\Zc_i - \Zct_s\right|^2\dd s.}
Rearranging the terms, we get
\eqlnostar{}{& \left(1-40\gamma_1 C_f^2\right)\sum_{i=0}^{N-1}\Eb\int_{t_i}^{t_{i+1}}|\Delta \Zct_s|^2\dd s + \Eb\left|\Delta \Ycb_{0}\right|^2\\
\le &\; \left(1+C_3 h\right)\Eb\left|\Delta \Ycb_{N}\right|^2 + C_3 h\sum_{i=1}^{N-1}\Eb\left|\Delta \Ycb_{i}\right|^2 + C\left(\eps^{-2}h + h\right) \\
& + 20\left(C_f^2+C_{\phi}^2\eps^{-2}\overline{\kappa}_{\infty}^2\right)\gamma_1 h\sum_{i=0}^{N-1}\Eb\left|\YcZc_{i+1} - \YcbZc_{i+1}\right|^2 + 40\gamma_1 C_f^2\sum_{i=0}^{N-1}\Eb\int_{t_i}^{t_{i+1}}\left|\Zc_i - \Zct_s\right|^2\dd s\\
\le & \; C_3 h\sum_{i=1}^{N-1}\Eb\left|\Delta \Ycb_{i}\right|^2 + C\left(\eps^{-2}h + h\right) \\
& + 20\left(C_f^2+C_{\phi}^2\eps^{-2}\overline{\kappa}_{\infty}^2\right)\gamma_1 h\sum_{i=0}^{N-1}\Eb\left|\YcZc_{i+1} - \YcbZc_{i+1}\right|^2 + 40\gamma_1 C_f^2\sum_{i=0}^{N-1}\Eb\int_{t_i}^{t_{i+1}}\left|\Zc_i - \Zct_s\right|^2\dd s\\
\le & \; C\exp(C_1 T + C_3 T)\left(\eps^{-2}h + h + \eps^{-4}h\right) \\
& + 20\exp(C_3 T)\gamma_1\left(\left(C_f^2+C_{\phi}^2\eps^{-2}\overline{\kappa}_{\infty}^2\right)\exp(C_1 T)+2C_f^2 C_2\right)\Var\left(\YcZc_0\right)\\
\label{eq-DBS approx error intermediate 5}
& + 20\left(C_f^2+C_{\phi}^2\eps^{-2}\overline{\kappa}_{\infty}^2\right)\gamma_1 h\sum_{i=0}^{N-1}\Eb\left|\YcZc_{i+1} - \YcbZc_{i+1}\right|^2 + 40\gamma_1 C_f^2\sum_{i=0}^{N-1}\Eb\int_{t_i}^{t_{i+1}}\left|\Zc_i - \Zct_s\right|^2\dd s.}
Combining this estimate once more with the bounds from \autoref{lemma:estimate on L2 projection residual}, we obtain the second statement of our main result:
\eqlnostar{eq-DBS approx error intermediate 6}{\Eb\int_{0}^{T}|\Delta \Zct_s|^2\dd s
\le & \; C\exp(C_1 T + C_3 T)\left(\eps^{-2}h + h + \eps^{-4}h\right) + C_4\Var\left(\YcZc_{0}\right),}
with $C_4$ defined in \eqref{eq-C4}. Since $\eps<1$, the proof is complete.
\end{proof}

\begin{corollary}\label{corollary:DBS approximation error}
Assume the same assumptions as in \autoref{thm:deep backward scheme approx error}, and let $\YcbZc_i, i=0,\ldots,N$ be defined as in \eqref{eq-Ycb definition}. Then, an immediate result by orthogonality of the $\Lb^2(\Fc_{t_i})$ projection is that:
\eqlnostar{eq-MSE decomposition}{
\Eb\left|\YcZc_i - \Yeps_{t_i}\right|^2 = \Eb\left|\YcZc_i - \YcbZc_i\right|^2 + \Eb\left|\YcbZc_i - \Yeps_{t_i}\right|^2.
} 
Therefore, we have
\eqlnostar{}{\underset{0\leq i\leq N-1}{\max}\,\Eb\Big|\YcZc_i & - \Yeps_{t_i}\Big|^2 \le 
C\exp(C_1 T + C_3 T)\eps^{-4}h\\
& + \left(20\gamma_1\exp(C_3 T)\left(\left(C_f^2+C_{\phi}^2\eps^{-2}\overline{\kappa}_{\infty}^2\right)\exp(C_1 T) + 2C_f^2 C_2\right) + \exp(C_1 T)\right)\Var\left(\YcZc_0\right).}
Furthermore, \autoref{lemma:estimate on L2 projection residual} and \autoref{thm:deep backward scheme approx error} yield that
\eqlnostar{}{\sum_{i=0}^{N-1}\Eb\int_{t_i}^{t_{i+1}}|\Zc_i - \Zeps_t|^2\dd t \le & \; 2(C_2 + C_4)\Var\left(\YcZc_{0}\right) + C\exp(C_1 T + C_3 T)\eps^{-4}h.}
\end{corollary}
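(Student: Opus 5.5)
The plan is to first verify the Pythagorean identity \eqref{eq-MSE decomposition}, and then combine it with \autoref{lemma:estimate on L2 projection residual} and \autoref{thm:deep backward scheme approx error} term by term.

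\emph{The identity.} By \eqref{eq-Ycb definition}, $\YcbZc_i = \Eb_i[\YcZc_i]$, and $\Yeps_{t_i}$ is $\Fc_{t_i}$-measurable. Hence $\YcbZc_i - \Yeps_{t_i}$ is $\Fc_{t_i}$-measurable, while $\YcZc_i - \YcbZc_i$ has zero $\Fc_{t_i}$-conditional mean. Writing
\[
\YcZc_i - \Yeps_{t_i} = (\YcZc_i - \YcbZc_i) + (\YcbZc_i - \Yeps_{t_i}),
\]
we expand the square and condition the cross term on $\Fc_{t_i}$. The cross term then vanishes, since
\[
\Eb\big[(\YcbZc_i - \Yeps_{t_i})\,\Eb_i[\YcZc_i - \YcbZc_i]\big] = 0.
\]
The only thing to check is square integrability of all quantities. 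This follows from $\Yeps\in\Sb^2$ (\autoref{thm:a priori}) and from the fact that the DBS recursion with Lipschitz $\feps$ and square-integrable network outputs produces $\YcZc_i\in\Lb^2$.

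\emph{The $Y$-bound.} We take the maximum over $i$ in \eqref{eq-MSE decomposition} and bound the two terms separately. The projection residual is controlled by \eqref{eq-estimate on L2 projection residual}:
\[
\exp(C_1T)\Var(\YcZc_0) + C\exp(C_1T)\eps^{-2}h.
\]
The adapted part is controlled by the first estimate of \autoref{thm:deep backward scheme approx error}. Adding the two gives exactly the stated coefficient of $\Var(\YcZc_0)$. For the remainders, we use $\eps<1$ to absorb $C\exp(C_1T)\eps^{-2}h$ into $C\exp(C_1T+C_3T)\eps^{-4}h$, noting that $\exp(C_3T)\ge 1$.

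\emph{The $Z$-bound.} For each $s\in[t_i,t_{i+1})$ we use the elementary inequality
\[
|\Zc_i - \Zeps_s|^2 \le 2|\Zc_i - \Zct_s|^2 + 2|\Zct_s - \Zeps_s|^2,
\]
integrate, and sum over $i$. The first sum is bounded by \eqref{eq-estimate on Zi Ztilde difference}, namely $C_2\Var(\YcZc_0) + C\exp(C_1T)\eps^{-2}h$. The second is bounded by the $Z$-estimate of \autoref{thm:deep backward scheme approx error}, namely $C\exp(C_1T+C_3T)\eps^{-4}h + C_4\Var(\YcZc_0)$. Multiplying both by $2$ and again absorbing the $\eps^{-2}h$ remainder (with the generic $C$ enlarged) gives $2(C_2+C_4)\Var(\YcZc_0) + C\exp(C_1T+C_3T)\eps^{-4}h$.

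There is no real obstacle here. The only point needing care is that the hypotheses of \autoref{lemma:estimate on L2 projection residual} (small $h<\gamma_0$) and of \autoref{thm:deep backward scheme approx error} hold simultaneously; this is implicit in ``the same assumptions''. One should also track that the generic constant $C$ stays independent of $h$, $\eps$ and $\theta$ after absorption, which holds because only $\eps<1$ and $\exp(C_3T)\ge1$ are used.
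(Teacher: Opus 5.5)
Your proposal is correct and follows essentially the same argument the paper gives inline. You use the Pythagorean identity from the $\Fc_{t_i}$-projection and add the bounds of \autoref{lemma:estimate on L2 projection residual} and \autoref{thm:deep backward scheme approx error}, absorbing $\exp(C_1T)\eps^{-2}h$ via $\eps<1$. For the $Z$-part you apply $|a+b|^2\le 2|a|^2+2|b|^2$ through $\Zct$, which yields the $2(C_2+C_4)$ coefficient.
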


Combining \autoref{thm:a priori}, \autoref{lemma:estimate on L2 projection residual}, \autoref{thm:deep backward scheme approx error}, and \autoref{corollary:DBS approximation error}, we have the following upper bound for the error between the approximation $(\YcZc,\Zc)$ and the solution $(Y,Z)$ of RBSDE \eqref{eq-RBSDE}.

\begin{theorem}[Approximation error of DBS]\label{thm:true DBS approx error}
Assume the same assumptions as in \autoref{thm:deep backward scheme approx error}, and let $(Y,Z)\in\Sb^2\times\Hb^2$ be the solution of RBSDE \eqref{eq-RBSDE}. Then, we have
\eqlnostar{}{ \underset{0\leq i\leq N-1}{\max}\,\Eb\Big|\YcZc_{i} - Y_{t_i}\Big|^2 &\le C\eps + C\exp(C_1 T + C_3 T)\eps^{-4}h\\
 &\hspace{-60pt}+ \left(40\gamma_1\exp(C_3 T)\left(\left(C_f^2+C_{\phi}^2\eps^{-2}\overline{\kappa}_{\infty}^2\right)\exp(C_1 T) + 2C_f^2 C_2\right)+\exp(C_1T)\right)\Var\left(\YcZc_0\right),\\
 \sum_{i=0}^{N-1}\Eb\int_{t_i}^{t_{i+1}}|\Zc_i - Z_t|^2\dd t &\le C\eps + 3(C_2+C_4)\Var\left(\YcZc_{0}\right) + C\exp(C_1 T + C_3 T)\eps^{-4}h,} 
where $C_1, C_2, C_3, C_4$ are defined in \eqref{eq-C1}, \eqref{eq-C2}, \eqref{eq-C3}, and \eqref{eq-C4}, respectively. Additionally, if $f$ is non-increasing in $y$, then
\eqlnostar{}{& \underset{0\leq i\leq N-1}{\max}\,\Eb\Big|\YcZc_{i} - Y_{t_i}\Big|^2 \le C\eps^2 + C\exp(C_1 T + C_3 T)\eps^{-4}h\\
& + \left(40\gamma_1\exp(C_3 T)\left(\left(C_f^2+C_{\phi}^2\eps^{-2}\overline{\kappa}_{\infty}^2\right)\exp(C_1 T) + 2C_f^2 C_2\right)+\exp(C_1T)\right)\Var\left(\YcZc_0\right).}
\end{theorem}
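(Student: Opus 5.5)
The proof is a triangle inequality that inserts the regularized solution $(\Yeps,\Zeps)$ between the DBS approximation and the RBSDE solution. For the $Y$-component, fix $i$ and write
\[
\YcZc_i - Y_{t_i} = \big(\YcZc_i - \Yeps_{t_i}\big) + \big(\Yeps_{t_i} - Y_{t_i}\big),
\]
so that $\Eb|\YcZc_i - Y_{t_i}|^2 \le 2\Eb|\YcZc_i - \Yeps_{t_i}|^2 + 2\Eb|\Yeps_{t_i}-Y_{t_i}|^2$.

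The second term is handled by \autoref{thm:a priori} with $p=2$. It bounds $\Eb|(Y-\Yeps)^*_T|^2$ by $C\eps\,\Eb|\Keps_T| \le C\eps$, using the uniform-in-$\eps$ bound on $\Keps$ in $\Sb^2$. The first term is bounded by \autoref{corollary:DBS approximation error}. The corollary's bound is the sum of the $\Var(\YcZc_0)$-coefficient from \autoref{thm:deep backward scheme approx error} and the residual $\exp(C_1T)\Var(\YcZc_0)$ from \autoref{lemma:estimate on L2 projection residual}, combined through the orthogonal decomposition \eqref{eq-MSE decomposition}. Taking the maximum over $i$ and absorbing constants gives the stated form. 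Note that the factor $40\gamma_1$ in front of the bracket is exactly twice the $20\gamma_1$ in the corollary.

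The factor $2$ then also multiplies the $\exp(C_1T)$ term, whereas the statement keeps it at $1$. To match, I will avoid the crude factor $2$. I will split $\YcZc_i - Y_{t_i} = (\YcZc_i-\YcbZc_i) + (\YcbZc_i - Y_{t_i})$ and use the fact that $Y_{t_i}$ is $\Fc_{t_i}$-measurable, so this is again an orthogonal decomposition. I will then apply $(a+b)^2\le 2a^2+2b^2$ only to $\YcbZc_i - Y_{t_i} = (\YcbZc_i-\Yeps_{t_i}) + (\Yeps_{t_i}-Y_{t_i})$. This yields exactly the coefficient $40\gamma_1(\cdots)+\exp(C_1T)$ on $\Var(\YcZc_0)$, together with $C\eps + C\exp(C_1T+C_3T)\eps^{-4}h$.

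For the $Z$-component, I will write $\Zc_i - Z_t = (\Zc_i - \Zeps_t) + (\Zeps_t - Z_t)$ and bound the square by $\tfrac32|\Zc_i-\Zeps_t|^2 + 3|\Zeps_t-Z_t|^2$ using Young's inequality, or simply by a factor $2$ followed by enlarging constants. The first part, summed over $i$, is controlled by \autoref{corollary:DBS approximation error} as $2(C_2+C_4)\Var(\YcZc_0) + C\exp(C_1T+C_3T)\eps^{-4}h$. The second part is at most $C\eps$ by \eqref{eq: regularization error} with $p=2$. The coefficient $3(C_2+C_4)$ in the statement is dominated by these choices.

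For the monotone case, when $f$ is non-increasing in $y$, \autoref{thm:a priori} gives $0\le \Yeps_t - Y_t\le\eps$ almost surely. Hence $\Eb|\Yeps_{t_i}-Y_{t_i}|^2\le\eps^2$, and the same decomposition yields the improved $C\eps^2$ term.

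The only delicate point is the bookkeeping of constants so that the displayed coefficients appear exactly, which the orthogonality-first splitting handles. Everything else is direct citation of the earlier results.
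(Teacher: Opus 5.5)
Your proposal is correct and follows the same route as the paper, which simply combines \autoref{thm:a priori} (with $p=2$), \autoref{lemma:estimate on L2 projection residual}, \autoref{thm:deep backward scheme approx error} and \autoref{corollary:DBS approximation error} by inserting $(\Yeps,\Zeps)$; your orthogonal split through $\YcbZc_i$, with the factor $2$ applied only to $\YcbZc_i-Y_{t_i}$, is exactly what produces the coefficient $40\gamma_1(\cdots)+\exp(C_1T)$. For the $Z$-part, use the $\tfrac32$/$3$ Young split (or equivalently a three-term split through $\Zct$ with factor $3$), not the plain factor-$2$ fallback: the latter gives $4(C_2+C_4)$, which does not match the statement because $C_2$ and $C_4$ are explicit rather than generic constants.
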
 

\begin{remark}
\autoref{corollary:DBS approximation error} shows that the distance between the regularized solution $(\Yeps,\Zeps)$ and the DBS approximation $(\YcZc,\Zc)$ can be controlled by the variance of the initial approximation $\YcZc_0$. A notable feature of our analysis is that, unlike \cite{gao2023convergence}, the error estimates does not impose an a priori smallness assumption on the Lipschitz constants of the driver with respect to $y$ and $z$. Thus, the argument applies to a broader class of drivers.

\autoref{thm:true DBS approx error} combines this estimate with the regularization error bound established in \autoref{thm:a priori} to obtain an error estimate between the DBS approximation $(\YcZc,\Zc)$ and the solution $(Y,Z)$ of the original RBSDE. The result also reveals a trade-off in the choice of the regularization parameter $\eps$. On the one hand, $\eps$ should be small so that the regularized solution $(\Yeps,\Zeps)$ is close to the original RBSDE solution $(Y,Z)$. On the other hand, $\eps$ cannot be chosen too small relative to the time step $h$, since the DBS discretization error contains the term $\exp(C_1T+C_3T)\eps^{-4}h$. By the definition of $C_1$ in \eqref{eq-C1}, this term behaves like, up to constants,
$\exp(\overline C\eps^{-2})\eps^{-4}h$. Therefore, although decreasing $\eps$ reduces the regularization error, it also amplifies the approximation error exponentially. This suggests, at the level of the displayed upper bound, that the choice of $\eps$ should balance the regularization error and the DBS approximation error.

The exponential dependence on $\eps^{-2}$ leads formally to a logarithmic choice of the regularization parameter. Indeed, for any $\alpha>\overline C$, taking $\eps=\sqrt{\alpha|\log h|^{-1}}$ gives $\exp(\overline C\eps^{-2})\eps^{-4}h
=
\alpha^{-2}h^{1-\overline C/\alpha}|\log h|^2
\to 0$.
This calculation should be interpreted only as guidance from the displayed upper bound, since the remaining constants and stability conditions in the error estimate are not uniform in $\eps$. In the numerical experiments, we instead follow the empirical choice $\eps=h$, which has been observed to perform well in \cite{agarwal2026numerical}. We adopt this choice throughout Section~\ref{sec:numerical results}.
\end{remark}

\subsection{An Upper Bound for the Training Objective}

To complement the error bounds established above, we now relate the training objective $\Var(\YcZc_0)$ to the universality of neural networks. The representation of $\Zeps$ used for this purpose is stated in the following proposition. It is a direct byproduct of the proof of \cite[Theorem 4]{bally2002reflected}, applied to the semilinear PDE associated with the regularized RBSDE. In particular, the result shows that $\Zeps_t=\sigma(t,X_t)^\top\partial_x v^\eps(t,X_t)$, where $v^\eps$ is the weak solution of the associated semilinear PDE. Hence, $\Zeps$ can be viewed as a deterministic function of time and the forward state process, which allows us to connect the DBS training objective to the neural-network approximation error for the function representing $\Zeps$.

\begin{proposition}[Markovian representation of the regularized RBSDE]
\label{propositon:nonlinear Feynman-Kac formula}
Let $\rho:\Rb^d\to\Rb_+$ be a continuous, strictly positive weight function
such that $\int_{\Rb^d}\rho(x)\,\dd x<\infty$, and set $L^2_\rho:=L^2(\Rb^d,\rho(x)\dd x).$ Define
\eqlnostar{}{H^1_\rho := \Bigl\{u:[0,T]\times\Rb^d\to\Rb: u_t\in L^2_\rho,\ 0\le t\le T,\ \sigma^\top \partial_x u \in L^2([0,T]\times\Rb^d,\dd t\times\rho(x)\dd x)\Bigr\}.
}
Let the obstacle be given by $\Phi(t,x):=h(t,x)+q(t,x)$, where $h$ is sufficiently smooth, $q$ is measurable, and both $h$ and $q$ have polynomial growth. Assume further that, for every $(t,x)$, the process $s\mapsto q(s,X_s)$ is a square-integrable c\`adl\`ag submartingale. Define the Markovian nonlinearity 
\eqlnostar{}{
F^\eps(t,x,y,z) := f(t,x,y,z) + \phieps\bigl(y-\Phi(t,x)\bigr)\Bigl(f\bigl(t,x,h(t,x),\sigma(t,x)^\top\partial_x h(t,x)\bigr) + (\partial_t+\mathcal L)h(t,x)\Bigr)^-.}
Here $\mathcal L\psi(t,x)
:=
\sum_{i,j=1}^d \frac12(\sigma\sigma^\top)_{ij}(t,x)\partial_{x_i x_j}^2\psi(t,x)
+\sum_{i=1}^d b_i(t,x)\partial_{x_i}\psi(t,x)$. Furthermore, assume that $v^\eps\in H^1_\rho$ satisfies $v^\eps(T,x)=\Phi(T,x),\; v^\eps(t,x)\ge \Phi(t,x), (t,x)\in[0,T]\times\Rb^d$, and, for every smooth test function $\varphi$ with compact support in $[0,T]\times\Rb^d$,
\eqlnostar{}{
&\int_t^T\int_{\Rb^d} v^\eps(s,x)\partial_s\varphi(s,x)\,\dd x\dd s -\int_{\Rb^d} \Phi(T,x)\varphi(T,x)\,\dd x +\int_{\Rb^d} v^\eps(t,x)\varphi(t,x)\,\dd x
\\
&\quad +\int_t^T\int_{\Rb^d} \Bigl( (\sigma^\top\partial_x v^\eps)(s,x) (\sigma^\top\partial_x\varphi)(s,x) + v^\eps(s,x)\nabla\bigl(\widetilde b\varphi)(s,x)\bigr) \Bigr)\,\dd x\dd s \\
& = \int_t^T\int_{\Rb^d} \varphi(s,x) F^\eps\Bigl( s,x,v^\eps(s,x), \sigma(s,x)^\top\partial_x v^\eps(s,x) \Bigr)\,\dd x\dd s,}
where $\widetilde b:=\frac12\sigma^\top\nabla\sigma+b$. Then $v^\eps$ is a weak solution of the semilinear PDE associated with the Markovian nonlinearity $F^\eps$. Moreover, for the forward process $X$ defined in \eqref{eq-SDE}, the processes
\eqlnostar{eq-pde solution}{
\Yeps_t = v^\eps(t,X_t),
\qquad \Zeps_t = \sigma(t,X_t)^\top\partial_x v^\eps(t,X_t),
}
solve the regularized RBSDE \eqref{eq-ReRBSDE}.
\end{proposition}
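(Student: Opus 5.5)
The plan is to follow the strategy of \citet[Theorem 4]{bally2002reflected}. We establish a generalized Itô formula for $v^\eps(t,X_t)$ when $v^\eps$ is only a weak solution in $H^1_\rho$. We then identify the drift with the regularized generator by a uniqueness argument for BSDEs.

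\textbf{Step 1: reduce the nonlinearity to $\feps$.} Since $q(s,X_s)$ is a square-integrable càdlàg submartingale and $h$ is smooth, Itô's formula applied to $h(t,X_t)$ gives $V_t=\sigma^\top\partial_x h(t,X_t)+(\text{martingale part of }q)$. It also gives $U_t=(\partial_t+\mathcal L)h(t,X_t)$, plus the absolutely continuous part of the compensator of $q$, which we take to vanish (or to be absorbed into $A$). Hence, along the trajectory, $F^\eps(t,X_t,y,z)=\feps(t,X_t,y,z)$, with $\kappa_t$ as in \eqref{eq:kappa}. In particular, $F^\eps$ is Lipschitz in $(y,z)$: the map $\phieps$ is $C_\phi\eps^{-1}$-Lipschitz by \autoref{lemma:regularity of phi^epsilon}, and $\kappa$ is bounded by \ref{hp-2}.

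\textbf{Step 2: Itô formula for weak solutions.} Regularize $v^\eps$ by smoothing the data and the frozen nonlinearity $g(s,x):=F^\eps(s,x,v^\eps,\sigma^\top\partial_x v^\eps)\in L^2_\rho$. The approximations $v_n$ are classical solutions of linear PDEs $(\partial_t+\mathcal L)v_n+g_n=0$. Apply Itô's formula to $v_n(t,X_t)$, obtaining
\[
v_n(t,X_t)=v_n(T,X_T)+\int_t^T g_n(s,X_s)\dd s-\int_t^T \sigma^\top\partial_x v_n(s,X_s)\dd W_s .
\]
To pass to the limit we use the equivalence of norms from \citet{bally2002reflected}:
\[
\Eb\int_t^T|\psi(s,X_s)|\dd s\asymp\int_t^T\!\!\int|\psi(s,x)|\rho(x)\dd x\dd s,
\]
uniformly over $t$, for the weight $\rho$. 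This turns $L^2_\rho$ and $H^1_\rho$ convergence of $v_n\to v^\eps$ into convergence in $\Sb^2\times\Hb^2$ of $\big(v_n(\cdot,X),\sigma^\top\partial_x v_n(\cdot,X)\big)$. The weak formulation, with the $\nabla(\widetilde b\varphi)$ term coming from the divergence form, ensures the limit PDE is the stated one. This step is the main obstacle: it needs the norm equivalence, which uses the nondegeneracy and regularity of $\sigma$ (implicitly assumed here, as in Bally et al.), together with stability of the linear equation in $H^1_\rho$. I would cite rather than reprove the norm-equivalence lemma.

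\textbf{Step 3: identification.} Define $Y_t:=v^\eps(t,X_t)$ and $Z_t:=\sigma^\top\partial_x v^\eps(t,X_t)$. From the limit in Step 2,
\[
Y_t=\Phi(X_T)+\int_t^T F^\eps(s,X_s,Y_s,Z_s)\dd s-\int_t^T Z_s\dd W_s .
\]
By Step 1 this equals the equation \eqref{eq-ReRBSDE}. Since $\feps$ is Lipschitz, that BSDE has a unique solution in $\Sb^2\times\Hb^2$, so $(\Yeps,\Zeps)=(Y,Z)$. The process $\Keps$ is then defined explicitly from $\Yeps$, which gives \eqref{eq-pde solution}.
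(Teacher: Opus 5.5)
Your route (freeze the nonlinearity, approximate by classical solutions of the linear problem, pass to the limit with the weighted norm equivalence, then identify with \eqref{eq-ReRBSDE} by uniqueness) is the one the paper intends, since the paper only points to the proof of \citet[Theorem 4]{bally2002reflected}. However, two steps fail as written.

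First, Step 1 does not follow from It\^o's formula. In \eqref{eq:kappa}, $\kappa_t=(f(t,X_t,S_t,V_t)+U_t)^-$ is built from three ingredients: the full obstacle $S_t=h(t,X_t)+q(t,X_t)$; $V_t$, which contains the martingale integrand of $q(\cdot,X)$; and $U_t$, which contains the absolutely continuous part of its compensator. By contrast, $F^\eps$ carries $f(t,x,h,\sigma^\top\partial_x h)+(\partial_t+\mathcal L)h$. You cannot absorb the compensator term into $A$: \ref{hp-2} requires $\dd A$ to be singular precisely so that $U$, and hence $\kappa$, is uniquely determined. Even without that term, the $y$- and $z$-arguments of $f$ still disagree. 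For instance, with $q\equiv c\neq0$ (a trivial submartingale) and $f=-ry$, one has $\kappa_t=(-r(h+c)+U_t)^-$, while $F^\eps$ carries $(-rh+U_t)^-$. So $F^\eps(t,X_t,\cdot)=\feps(t,X_t,\cdot)$ is a compatibility condition that the statement leaves implicit: essentially, $q(\cdot,X_\cdot)$ must contribute nothing to $(S,V,U)$. You have to impose it explicitly. Otherwise Step 3 only shows that $(v^\eps(\cdot,X),\sigma^\top\partial_x v^\eps(\cdot,X))$ solves the BSDE with driver $F^\eps(t,X_t,y,z)$, which is not \eqref{eq-ReRBSDE}.

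Second, the norm equivalence you quote in Step 2 is false for $X$ started at a fixed point. Take the indicator of a $\delta$-ball around the starting point on $[t,t+\delta^2]$: the left side exceeds the right by a factor of order $\delta^{-d}$. The weighted-norm lemma underlying this theory holds only after integrating the initial condition against the weight,
\[
\int_{\Rb^d}\Eb\int_t^T|\psi(s,X^{t,x}_s)|\dd s\,\rho(x)\dd x\asymp\int_t^T\int_{\Rb^d}|\psi(s,x)|\rho(x)\dd x\dd s.
\]
It rests on the smoothness of $b,\sigma$ through the stochastic flow, not on nondegeneracy. Two further corrections to your limit argument: the $\Sb^2$ control comes from the a priori estimate for the linear BSDE, not from the norm equivalence, and the identification $\lim v_n=v^\eps$ requires uniqueness of weak solutions of the linear problem. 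With these in place, the argument shows that $(v^\eps(s,X^{t,x}_s),\sigma^\top\partial_x v^\eps(s,X^{t,x}_s))$ solves the BSDE only for $\rho(x)\dd x$-a.e.\ starting point. Consistently with this, $v^\eps(0,\cdot)\in L^2_\rho$ has no pointwise value at the given $x$. To reach the process of \eqref{eq-SDE}, you need an extra step:
\begin{itemize}
\item apply the a.e.\ representation from a time $\delta>0$, using the Markov/flow property and the absolute continuity of the law of $X_\delta$ (this is where nondegeneracy actually enters);
\item identify the result with $(\Yeps,\Zeps)$ on $[\delta,T]$ by uniqueness of the Lipschitz BSDE;
\item let $\delta\downarrow0$, defining $\Yeps_0$ as the limit.
\end{itemize}
Otherwise, the conclusion must be stated for a.e.\ initial point, which is the form in which the weighted-Sobolev theory delivers it.
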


In view of this representation, the next theorem shows that the variance term $\Var(\YcZc_0)$ can be controlled by the approximation error of neural networks for the Markovian representation of $\Zeps$, up to a discretization-regularization remainder. In particular, for fixed $\eps$ and sufficiently small $h$, the DBS training objective can be made small whenever the function $\Zeps_t = \sigma(t,X_t)^\top \partial_x v^\eps(t,X_t)$ can be approximated accurately by neural networks.

\begin{theorem}\label{thm:controlofvar}
Assume the same assumptions as in \autoref{thm:deep backward scheme approx error}. Let $\gamma_0, \gamma_1,\gamma_2>0$ be chosen such that $1-40\gamma_1 C_f^2>0$ and $0<C_6<1$,  where
\eqlnostar{eq-C5}{& C_5 := \frac{2}{\gamma_2} + 64\gamma_2 \left(C_f^2+C_{\phi}^2\eps^{-2}\overline{\kappa}_{\infty}^2\right),\\
\label{eq-C6}
& C_6:= 1- 2\exp(C_5 T)\Big(1+\frac{2h}{\gamma_2}\Big)C_4.}
Then, we have, for sufficiently small $h\le\frac{\gamma_2}{2}$,
\eqlnostar{}{\Var\left(\YcZc_{0}\right) \le &\; 2C_6^{-1}\exp(C_5 T)\Big(1+\frac{2h}{\gamma_2}\Big)\sum_{i=0}^{N-1}h\Eb\left|\Zc_i(X_{t_i}; \theta) - \sigma({t_i},X_{t_i})^\top \partial_x v^\eps({t_i},X_{t_i})\right|^2\\
& + C\exp(C_1 T + C_3 T + C_5 T)\eps^{-4}h,}
where $v^\eps(t,x)$ is given in \autoref{propositon:nonlinear Feynman-Kac formula}, and $C_1, C_3, C_4$ are defined in \eqref{eq-C1}, \eqref{eq-C3}, \eqref{eq-C4}, respectively.
\end{theorem}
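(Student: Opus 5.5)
We bound $\Var(\YcZc_0)$ by comparing the DBS recursion with a ``frozen'' version driven by the true control, and then feed the resulting $\Zc_i-\Zct$ terms back through Corollary~\ref{corollary:DBS approximation error}. The starting point is the variational characterization $\Var(\YcZc_0)=\min_{c\in\Rb}\Eb|\YcZc_0-c|^2\le \Eb|\YcZc_0-\Yeps_0|^2$, since $\Yeps_0$ is deterministic ($\Fc_0$ is trivial). It therefore suffices to bound $\Eb|\YcZc_0-\Yeps_0|^2$ from above by the network approximation error $\sum_i h\,\Eb|\Zc_i(X_{t_i})-\Zeps_{t_i}|^2$ (with $\Zeps_{t_i}=\sigma^\top\partial_x v^\eps(t_i,X_{t_i})$ by Proposition~\ref{propositon:nonlinear Feynman-Kac formula}), a discretization remainder, and a fraction of $\Var(\YcZc_0)$ itself that can be absorbed. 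This absorption is exactly what the condition $0<C_6<1$ encodes.

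\textbf{Step 1: one-step recursion.} Write $\delta_i:=\YcZc_i-\Yeps_{t_i}$, which is not adapted, so no martingale orthogonality is available. From \eqref{eq-deep backward scheme} and \eqref{eq-ReRBSDE},
\[
\delta_i=\delta_{i+1}+\Big(\feps(X_{i+1},\YcZc_{i+1},\Zc_i)h-\int_{t_i}^{t_{i+1}}\feps(X_s,\Yeps_s,\Zeps_s)\dd s\Big)-\int_{t_i}^{t_{i+1}}(\Zc_i-\Zeps_s)\dd W_s .
\]
Apply Young's inequality with parameter $\gamma_2$:
\[
\Eb|\delta_i|^2\le (1+h/\gamma_2)\Eb|\delta_{i+1}|^2+(1+\gamma_2/h)\Eb|\text{rest}|^2 .
\]
The stochastic integral contributes $\Eb\int_{t_i}^{t_{i+1}}|\Zc_i-\Zeps_s|^2\dd s$. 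The driver difference is handled by the Lipschitz property of $f$ and Lemma~\ref{lemma:regularity of phi^epsilon}, giving constants $C_f^2+C_\phi^2\eps^{-2}\overline\kappa_\infty^2$. Together with the time regularity of $X,S,U,V,\Yeps$, as in the estimate of ${IV}_i$, this yields
\[
\Eb|\delta_i|^2\le (1+C_5h)\Eb|\delta_{i+1}|^2+\Big(1+\tfrac{2h}{\gamma_2}\Big)\Eb\int_{t_i}^{t_{i+1}}|\Zc_i-\Zeps_s|^2\dd s + C\eps^{-2}h^2 .
\]
The factor $64\gamma_2$ in $C_5$ arises from splitting and absorbing the $\gamma_2 h\,\Eb|\delta_{i+1}|^2$ term, under $h\le\gamma_2/2$. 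Some care is needed so that the stochastic integral term keeps a coefficient close to $1$ rather than $\gamma_2/h$. To achieve this, I would first separate the martingale part, using that the driver term times $h$ is small, and apply Young to the drift only.

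\textbf{Step 2: discrete Gronwall and splitting of $\Zc_i-\Zeps_s$.} Since $\delta_N=0$, the backward discrete Gronwall lemma gives
\[
\Eb|\delta_0|^2\le \exp(C_5T)\Big(1+\tfrac{2h}{\gamma_2}\Big)\sum_i\Eb\int_{t_i}^{t_{i+1}}|\Zc_i-\Zeps_s|^2\dd s+C\exp(C_5T)\eps^{-2}h .
\]
Now split $\Zc_i-\Zeps_s=(\Zc_i-\Zeps_{t_i})+(\Zeps_{t_i}-\Zeps_s)$. The first piece gives $2h\,\Eb|\Zc_i(X_{t_i})-\sigma^\top\partial_x v^\eps(t_i,X_{t_i})|^2$. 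The second is the $L^2$-regularity of $\Zeps$, which I would control by $C\exp(C_1T+C_3T)\eps^{-4}h$.

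\textbf{Main obstacle: the $Z$-regularity term.} This is where the difficulty lies, since no direct $Z$-regularity result is stated in the paper. My first attempt is to route it through the already-proved bounds instead of a Zhang-type regularity theorem. Write $\Zeps_{t_i}-\Zeps_s$ using $\Zc_i-\Zct_s$ and $\Zct_s-\Zeps_s$, i.e. bound $\sum_i\Eb\int|\Zc_i-\Zeps_s|^2$ directly by Corollary~\ref{corollary:DBS approximation error}. That would give $2C_4\Var(\YcZc_0)$ plus the $\eps^{-4}h$ remainder, producing precisely the $2\exp(C_5T)(1+2h/\gamma_2)C_4\Var(\YcZc_0)$ term appearing in $C_6$. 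I would then move this term to the left side and divide by $C_6$ to conclude, with the factor $2$ in the final bound coming from the splitting. If the bookkeeping forces the $\Zc_i-\Zeps_{t_i}$ term to appear with coefficient $2$, that matches the stated constant.
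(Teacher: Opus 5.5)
\textbf{Verdict: the proposal has two genuine gaps.} The one-step recursion in Step~1 cannot be derived, and the fallback for the $Z$-regularity term cannot be absorbed.

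\textbf{Step 1.} Your starting point $\Var(\YcZc_0)\le\Eb|\YcZc_0-\Yeps_0|^2$ is fine. The problem is the one-step recursion for $\delta_i=\YcZc_i-\Yeps_{t_i}$ in which $\Eb\int_{t_i}^{t_{i+1}}|\Zc_i-\Zeps_s|^2\dd s$ keeps a coefficient close to $1$. That needs $\Eb\big[\delta_i\int_{t_i}^{t_{i+1}}(\Zc_i-\Zeps_s)\dd W_s\big]=0$ (or the same with $\delta_{i+1}$). This fails in general because the DBS variables are not adapted; for example, $\Eb_{i+1}[\delta_{i+1}]=\YcbZc_{i+1}-\Yeps_{t_{i+1}}$ is not $\Fc_{t_i}$-measurable. ``Separating the martingale part'' is exactly what this forbids. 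Putting the integral under Young's inequality instead gives it the factor $1+\gamma_2/h$, which is fatal after summation.

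The paper never compares $\YcZc$ with $\Yeps$ at this stage. It works with the projection residual $\Delta\YcbZc_i=\YcZc_i-\Eb_i[\YcZc_i]$, which has three useful properties:
\begin{itemize}
\item $\Var(\YcZc_0)=\Eb|\Delta\YcbZc_0|^2$ and $\Delta\YcbZc_N=0$;
\item $\Delta\YcbZc_{i+1}$ is orthogonal to the $\Fc_{t_{i+1}}$-measurable integral $\int_{t_i}^{t_{i+1}}(\Zc_i-\Zct_s)\dd W_s$, which yields the exact identity in the proof of \autoref{lemma:estimate on L2 projection residual};
\item a reverse Young inequality then gives $\Eb|\Delta\YcbZc_i|^2\le(1+C_5h)\Eb|\Delta\YcbZc_{i+1}|^2+(1+2h/\gamma_2)\Eb\int_{t_i}^{t_{i+1}}|\Zc_i-\Zct_s|^2\dd s+\cdots$.
\end{itemize}
So after Gronwall the variance is controlled by the mismatch with $\Zct$, not with $\Zeps$.

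You could rescue your starting point with a global argument instead of a one-step one. Telescope $\delta_j=\sum_{i\ge j}\big(A_i-\int_{t_i}^{t_{i+1}}(\Zc_i-\Zeps_s)\dd W_s\big)$, where $A_i$ is the drift difference. Apply It\^o's isometry to the whole stochastic-integral sum, Cauchy--Schwarz to $\sum_{i\ge j}A_i$, and Gronwall in $j$. This gives a bound of the same shape, but with constants different from the stated ones.

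\textbf{The $Z$-regularity fallback.} Suppose you bound $\sum_i\Eb\int_{t_i}^{t_{i+1}}|\Zc_i-\Zeps_s|^2\dd s$ by \autoref{corollary:DBS approximation error}, or split $\Zeps_{t_i}-\Zeps_s$ through $\Zc_i$ and $\Zct_s$. Either way you feed back $C_2\Var(\YcZc_0)$ from \autoref{lemma:estimate on L2 projection residual} as well as $C_4\Var(\YcZc_0)$. Since $C_2>1$ by \eqref{eq-C2} and your prefactor $\exp(C_5T)(1+2h/\gamma_2)$ is at least $1$, the coefficient of $\Var(\YcZc_0)$ on the right exceeds $1$. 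Nothing can be absorbed. The first variant also discards the network term $\sum_i h\,\Eb|\Zc_i-\Zeps_{t_i}|^2$ entirely. An absorbable inequality of that kind would force $\Var(\YcZc_0)=O(h)$ for every network, which is false in general.

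In the paper, $C_6$ involves only $C_4$ because only the piece $\Zeps_s-\Zct_s$ of $\Zc_i-\Zct_s$ is fed back, via the second bound of \autoref{thm:deep backward scheme approx error}. The remaining piece $\Zeps_{t_i}-\Zeps_s$ is handled by the $L^2$ path regularity of $\Zeps$ imported from \citet[Theorem 17]{agarwal2026numerical}, which gives $C\eps^{-2}h$. Some external regularity input of this kind is unavoidable, because the target $\sigma(t_i,X_{t_i})^\top\partial_x v^\eps(t_i,X_{t_i})=\Zeps_{t_i}$ is evaluated at grid points.
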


\begin{proof}
Recall from \eqref{eq-DBS Y l2 error intermediate 1} that 
\eqlnostar{}{\Eb\left|\Delta \YcbZc_{i+1}\right|^2 & + \Eb\int_{t_i}^{t_{i+1}}\left|\Zc_i - \Zct_s\right|^2\dd s \\
= & \; \Eb\left|\Delta \YcbZc_{i} - \feps\left(X_{i+1},\YcZc_{i+1},\Zc_i\right)h + \Eb_i\left[\feps\left(X_{i+1},\YcZc_{i+1},\Zc_i\right)h\right]\right|^2\\
= &\; \Eb\left|\Delta \YcbZc_{i}\right|^2 + \Eb\left|\feps\left(X_{i+1},\YcZc_{i+1},\Zc_i\right)h - \Eb_i\left[\feps\left(X_{i+1},\YcZc_{i+1},\Zc_i\right)h\right]\right|^2 \\
& - 2\Eb\left[\Delta \YcbZc_{i}\left(\feps\left(X_{i+1},\YcZc_{i+1},\Zc_i\right)h - \Eb_i\left[\feps\left(X_{i+1},\YcZc_{i+1},\Zc_i\right)h\right]\right)\right]\\
\ge &\; \Big(1-\frac{h}{\gamma_2}\Big)\Eb\left|\Delta \YcbZc_{i}\right|^2 - \frac{\gamma_2}{h}\Eb\left|\feps\left(X_{i+1},\YcZc_{i+1},\Zc_i\right)h - \Eb_i\left[\feps\left(X_{i+1},\YcZc_{i+1},\Zc_i\right)h\right]\right|^2.
}
Hence, for sufficiently small $h$ such that $\frac{2h}{\gamma_2}\le 1$,
\eqlnostar{}{\Eb\left|\Delta \YcbZc_{i}\right|^2\le &\; \Big(1+\frac{2h}{\gamma_2}\Big)\Eb\left|\Delta \YcbZc_{i+1}\right|^2 + \Big(1+\frac{2h}{\gamma_2}\Big)\Eb\int_{t_i}^{t_{i+1}}\left|\Zc_i - \Zct_s\right|^2\dd s \\
& + \Big(1+\frac{2h}{\gamma_2}\Big)\gamma_2 h\Eb\left|\feps\left(X_{i+1},\YcZc_{i+1},\Zc_i\right) - \Eb_i\left[\feps\left(X_{i+1},\YcZc_{i+1},\Zc_i\right)\right]\right|^2 \\
\le &\; \Big(1+\frac{2h}{\gamma_2}\Big)\Eb\left|\Delta \YcbZc_{i+1}\right|^2 + \Big(1+\frac{2h}{\gamma_2}\Big)\Eb\int_{t_i}^{t_{i+1}}\left|\Zc_i - \Zct_s\right|^2\dd s \\
& + 2\gamma_2 h\Eb\Big|\feps\left(X_{i+1},\YcZc_{i+1},\Zc_i\right) - \feps\left(X_{i},\Eb_i\left[\YcZc_{i+1}\right],\Zc_i\right)\\
& + \feps\left(X_{i},\Eb_i\left[\YcZc_{i+1}\right],\Zc_i\right) - \Eb_i\left[\feps\left(X_{i+1},\YcZc_{i+1},\Zc_i\right)\right]\Big|^2\\
\le &\; \Big(1+\frac{2h}{\gamma_2}\Big)\Eb\left|\Delta \YcbZc_{i+1}\right|^2 + \Big(1+\frac{2h}{\gamma_2}\Big)\Eb\int_{t_i}^{t_{i+1}}\left|\Zc_i - \Zct_s\right|^2\dd s\\
& + 32\gamma_2 \left(C_f^2+C_{\phi}^2\eps^{-2}\overline{\kappa}_{\infty}^2\right)h\left(Ch + \Eb\left|\YcZc_{i+1} - \Eb_i\left[\YcZc_{i+1}\right]\right|^2\right) \\
\le &\; \Big(1+\frac{2h}{\gamma_2}\Big)\Eb\left|\Delta \YcbZc_{i+1}\right|^2 + \Big(1+\frac{2h}{\gamma_2}\Big)\Eb\int_{t_i}^{t_{i+1}}\left|\Zc_i - \Zct_s\right|^2\dd s\\
& + 32\gamma_2 \left(C_f^2+C_{\phi}^2\eps^{-2}\overline{\kappa}_{\infty}^2\right)h\Big(Ch + 2\Eb\int_{t_i}^{t_{i+1}}\left|\Zct_s\right|^2\dd s + 2\Eb\left|\Delta \YcbZc_{i+1}\right|^2\Big)\\
\le &\; \left(1+C_5 h\right)\Eb\left|\Delta \YcbZc_{i+1}\right|^2 + \Big(1+\frac{2h}{\gamma_2}\Big)\Eb\int_{t_i}^{t_{i+1}}\left|\Zc_i - \Zct_s\right|^2\dd s\\
& + C\gamma_2 h\left(1+\eps^{-2}\right)\Big(h + \Eb\int_{t_i}^{t_{i+1}}\left|\Zct_s\right|^2\dd s\Big),}
where 
$C_5 := \frac{2}{\gamma_2} + 64\gamma_2 \left(C_f^2+C_{\phi}^2\eps^{-2}\overline{\kappa}_{\infty}^2\right)$,
and the third inequality is obtained in the same way as \eqref{eq-DBS approx error intermediate 7}, the fourth follows from \eqref{eq-martingale representation theorem of ycbzc org}.

Applying Gronwall's lemma, we have
\eqlnostar{}{\underset{0\leq i\leq N-1}{\max}\,\Eb\left|\Delta \YcbZc_{i}\right|^2\le &\;\exp(C_5 T)\Eb\left|\Delta \YcbZc_{N}\right|^2 + \exp(C_5 T)\Big(1+\frac{2h}{\gamma_2}\Big)\sum_{i=0}^{N-1}\Eb\int_{t_i}^{t_{i+1}}\left|\Zc_i - \Zct_s\right|^2\dd s \\
& + C\exp(C_5 T)\Big(h + \eps^{-2}h\Big)\Big(1 + \Eb\int_{0}^{T}\left|\Zct_s\right|^2\dd s\Big)\\
\le &\; \exp(C_5 T)\Big(1+\frac{2h}{\gamma_2}\Big)\sum_{i=0}^{N-1}\Eb\int_{t_i}^{t_{i+1}}\left|\Zc_i - \Zct_s\right|^2\dd s + C\exp(C_5 T)\left(h + \eps^{-2}h\right).}
Therefore, we obtain
\eqlnostar{}{\Var\left( \YcZc_{0}\right) \le &\; \exp(C_5 T)\Big(1+\frac{2h}{\gamma_2}\Big)\sum_{i=0}^{N-1}\Eb\int_{t_i}^{t_{i+1}}\left|\Zc_i - \Zct_s\right|^2\dd s + C\exp(C_5 T)\left(h + \eps^{-2}h\right)\\
\le &\; 2\exp(C_5 T)\Big(1+\frac{2h}{\gamma_2}\Big)\sum_{i=0}^{N-1}\Eb\int_{t_i}^{t_{i+1}}\left|\Zc_i - \Zeps_s\right|^2\dd s \\
\label{eq-optimization error z estimate intermediate 1}
& + 2\exp(C_5 T)\Big(1+\frac{2h}{\gamma_2}\Big)\sum_{i=0}^{N-1}\Eb\int_{t_i}^{t_{i+1}}\left|\Zeps_s - \Zct_s\right|^2\dd s + C\exp(C_5 T)\left(h + \eps^{-2}h\right).}
Substituting \eqref{eq-DBS approx error intermediate 6} into \eqref{eq-optimization error z estimate intermediate 1} yields
\eqlnostar{}{\Var\left( \YcZc_{0}\right)
\le &\; 2\exp(C_5 T)\Big(1+\frac{2h}{\gamma_2}\Big)\bigg(\sum_{i=0}^{N-1}\Eb\int_{t_i}^{t_{i+1}}\left|\Zc_i - \Zeps_i\right|^2\dd s + \sum_{i=0}^{N-1}\Eb\int_{t_i}^{t_{i+1}}\left|\Zeps_i - \Zeps_s\right|^2\dd s\bigg)\\
& + 2\exp(C_5 T)\Big(1+\frac{2h}{\gamma_2}\Big)C_4\Var\left(\YcZc_{0}\right) + C\exp(C_1 T + C_3 T + C_5 T)\left(h + \eps^{-2}h\right)\\
& + C2\exp(C_5 T)\Big(1+\frac{2h}{\gamma_2}\Big)\left(\eps^{-2}h + h + \eps^{-4}h\right).}
Now define
$C_6:= 1- 2\exp(C_5 T)\left(1+\frac{2h}{\gamma_2}\right)C_4$.
Rearranging the terms and using  $0 < C_6 < 1$ gives
\eqlnostar{}{\Var\left( \YcZc_{0}\right)
\le &\; 2C_6^{-1}\exp(C_5 T)\Big(1+\frac{2h}{\gamma_2}\Big)\bigg(\sum_{i=0}^{N-1}\Eb\int_{t_i}^{t_{i+1}}\left|\Zc_i - \Zeps_i\right|^2\dd s + \sum_{i=0}^{N-1}\Eb\int_{t_i}^{t_{i+1}}\left|\Zeps_i - \Zeps_s\right|^2\dd s\bigg)\\
& + CC_6^{-1}\left(h + \eps^{-2}h\right) + C\exp(C_1 T + C_3 T + C_5 T)\left(\eps^{-2}h + h + \eps^{-4}h\right)\\
\le &\; 2C_6^{-1}\exp(C_5 T)\Big(1+\frac{2h}{\gamma_2}\Big)\bigg(\sum_{i=0}^{N-1}\Eb\int_{t_i}^{t_{i+1}}\left|\Zc_i - \Zeps_i\right|^2\dd s + C\eps^{-2}h\bigg) \\
& + C\exp(C_1 T + C_3 T + C_5 T)\left(\eps^{-2}h + h + \eps^{-4}h\right)\\
\le &\; 2C_6^{-1}\exp(C_5 T)\Big(1+\frac{2h}{\gamma_2}\Big)\sum_{i=0}^{N-1}h\Eb\left|\Zc_i - \Zeps_i\right|^2\\
& + C\exp(C_1 T + C_3 T + C_5 T)\left(\eps^{-2}h + h + \eps^{-4}h\right),}
where the second inequality follows from the path regularity for $\Zeps$ in \citet[Theorem 17]{agarwal2026numerical}. Finally, combining with \eqref{eq-pde solution}, we complete the proof.
\end{proof}

This result provides a theoretical justification for the DBS training objective. Indeed, it shows that, for fixed $\eps$, the loss $\Var(\YcZc_0)$ is controlled by the approximation error of the network class for the target map $\sigma^\top \partial_x v^\eps$, together with a discretization-regularization remainder. Combined with the error estimate in \autoref{thm:true DBS approx error}, this yields an error decomposition for approximating the original RBSDE. Since the constants are not uniform in $\eps$, this should be interpreted as a fixed-$\eps$ error estimate together with the separate regularization error, rather than as a uniform convergence rate as $\eps\downarrow0$.

\begin{remark}
The neural-network approximation term in the above theorem is not pointwise in $x$, but measured in $L^2$ under the law of the forward process $X_{t_i}$. Thus, for each fixed time step $t_i$, it is sufficient to approximate the target map $x \mapsto \sigma(t_i, x) \partial_x v^\eps(t_i, x)$ on the region of the state space effectively explored by the forward process $X_{t_i}$. Standard feedforward neural networks are therefore a natural approximation class, since classical universal-approximation results \citep{hornik1989multilayer,hornik1990universal,leshno1993multilayer} establish density in $L^p(\mu)$ for finite measures $\mu$.

We emphasize, however, that the target map itself depends on the regularization parameter $\eps$. As $\eps\downarrow 0$, the regularized generator becomes increasingly steep near the obstacle, since derivatives of $\phieps$ scale as powers of $\eps^{-1}$. Consequently, the functions $\sigma(t_i, x) \partial_x v^\eps(t_i, x)$ may become more difficult to approximate for small $\eps$, especially near the contact region. To obtain convergence as $\eps\downarrow0$, one therefore needs the neural-network classes to approximate the family $\{\sigma(t_i, x) \partial_x v^\eps(t_i, x)\}_{\eps>0}$ with sufficient accuracy as $\eps$ decreases, possibly by increasing the network size or other complexity parameters. Establishing such uniform-in-$\eps$ approximation properties would require additional regularity or structural information on the family $v^\eps$.

Finally,  the above universal-approximation results are largely existential in nature: they do not specify how to construct or train the approximating networks. Furthermore, quantitative approximation or convergence rates generally require additional assumptions on the target function class or on the underlying problem structure; see \cite{yarotsky2017error,lu2021deep,de2021approximation,montanelli2020error,schmidt2020nonparametric} for representative results.
\end{remark}

\section{Numerical Experiments on the Pricing of American-type Options}\label{sec:numerical results}

In this section, we present numerical experiments on the pricing of American-type options to assess the accuracy of the proposed deep learning-based solvers for RBSDEs in high-dimensional settings. 

Denote by $r$ the risk-free interest rate. We consider a $d$-dimensional underlying asset, whose price process $X=(X^1,\ldots,X^d)^\top$ evolves, under the risk-neutral measure, according to a geometric Brownian motion
\eqlnostar{}{
\frac{\dd X_t^i}{X_t^i} = r\,\dd t + \sigma_i\,\dd W_t^i,
\qquad i=1,\ldots,d,
}
where $W:=(W^1,\ldots,W^d)^\top$ is a $d$-dimensional Brownian motion with instantaneous correlations:
$ \dd\langle W^i,W^j\rangle_t = \rho_{ij}\;\dd t$. Hereafter, we use $\rho$ to denote the correlation matrix of $W$. 

We consider the pricing of an American option written on the underlying asset process $X$, with payoff function $\Phi(\cdot)$. The corresponding value process is given by $$
Y_t := \sup_{\tau \in \mathcal{T}_{t,T}} \Eb[e^{-r(T-\tau)}\Phi(X_\tau)\vert \mathcal{F}_t],
$$ 
where $\mathcal{T}_{t,T}$ denotes the set of all stopping times in $[t,T]$. It is well known that $Y_t$, together with the adjoint process $Z_t$, are characterized by the RBSDE \eqref{eq-RBSDE}, with obstacle process $S_t = \Phi(X_t)$, and generator $f(t,x,y,z)=-ry$. In the following examples, we specify the functional form of $\Phi$ for different types of call and put options.

For all numerical experiments, we fix the parameters
\eqlnostar{}{
r = 0.02,\quad \sigma = 0.2,\quad T = 1,\quad K^{\mathrm{str}} = 100, \quad X_0^i\equiv 100, \quad \rho_{i,j}\equiv 0, \quad N = 64,\quad \eps = h = \frac{T}{N}.
}
Regarding the hyperparameters and implementation details, we employ a feedforward neural network architecture consisting of three hidden layers, each with $d+10$ neurons, where $d$ denotes the dimension of the state vector $X$. The ReLU activation function is used throughout. Our implementation is based on the \texttt{PyTorch} library. We also incorporate batch normalization the form \texttt{BatchNorm1d} $(d+1;\,\epsilon=10^{-6},\,\text{momentum}=0.99)$, applied to the input layer. For the DBS, the network is trained using a batch size of 512 simulated paths for 3000 training iterations. Optimization is performed using the Adam algorithm (\texttt{torch.optim.Adam}) with an initial learning rate of \(5\times10^{-3}\), combined with a stepwise learning-rate scheduler (\texttt{StepLR}) with step size 100 and decay factor 0.99. The DBS price is obtained from Step 4 of the algorithm: after training the control network $\bm{\Zc}$, we simulate $10^6$ independent backward paths using the learned network and evaluate their mean as a proxy for $\Eb[\YcZc_0]$. For the DFS, we use the same network architecture and batch size, but train the model for 5000 iterations with an initial learning rate of $5\times10^{-2}$ and the same scheduler; the reported DFS price is given directly by the trained parameter $y$. All numerical experiments are conducted on a device equipped with an AMD Ryzen 9 9950X 16-core processor, 96 GB of RAM, and an NVIDIA GeForce RTX 5090 GPU (32 GB memory).

Before presenting the numerical examples, we note that the decomposition \eqref{eq-representation of S} is used to construct the regularized BSDE. In particular, the regularized generator \eqref{eq-regularized generator} depends on the processes $S$, $U$, and $V$, which must therefore be identified for each payoff specification.

\subsection{Geometric Put}

Let us consider the pricing of American geometric put options written on a basket of assets. Define
\eqlnostar{}{
\overline X_t^g := \bigg(\prod_{i=1}^d X_t^i\bigg)^{1/d},\quad M = \left(\frac{\sigma_1}{d},\frac{\sigma_2}{d},\cdots,\frac{\sigma_d}{d}\right)^\top.
}
The payoff function (obstacle) is given by $\Phi(X_t) = S_t = \big(K^{\text{str}} - \overline X_t^g\big)^+$, where $K^{\mathrm{str}}$ denotes the strike price. Then $\overline X$ follows the following dynamics
\eqlnostar{}{
\frac{\dd \overline X_t^g}{\overline X_t^g} = \bigg(r - \frac{1}{2d}\sum_{i=1}^d \sigma_i + \frac{1}{2}M^\top \rho M\bigg)\,\dd t + M \cdot\dd W_t,
\qquad
\overline X_0 = \bigg(\prod_{i=1}^d X_0^i\bigg)^{1/d}.
}
By the It\^o-Tanaka formula, we know that the obstacle process admits the decomposition
\eqlnostar{}{
\dd S_t
= -\Ib_{\{\overline X_t^g\le K^{\mathrm{str}}\}}\overline X_t^g\Big(r - \frac{1}{2d}\sum_{i=1}^d \sigma_i + \frac{1}{2}M^\top \rho M\Big)\,\dd t
-\Ib_{\{\overline X_t^g\le K^{\mathrm{str}}\}}\overline X_t^g\,M \cdot \dd W_t
+ \frac{1}{2}\,\dd L_t^{K^{\mathrm{str}}}(\overline X^g),
}
where $L^{K^{\mathrm{str}}}(\overline X^g)$ denotes the local time process of $\overline X^g$ at level $K^{\mathrm{str}}$. Therefore, we can obtain the representation for $U$ and $V$ as
\eqlnostar{}{U_t = -\Ib_{\{\overline X_t^g\le K^{\mathrm{str}}\}}\overline X_t^g\Big(r - \frac{1}{2d}\sum_{i=1}^d \sigma_i + \frac{1}{2}M^\top \rho M\Big),\qquad V_t = -\Ib_{\{\overline X_t^g\le K^{\mathrm{str}}\}}\overline X_t^g\,M.}
Since this problem can be reduced to the pricing of a one-dimensional American put option written on the geometric average, it allows us to compute a highly accurate reference solution. In our numerical experiments, we use a binomial tree method to approximate the one-dimensional American option price written on the geometric average, with $10^4$ time steps, and treat this solution as a benchmark for assessing the accuracy of the proposed deep learning-based solver. The results for the DFS and the DBS are reported in Table \ref{tab:american-geom-put}. Both schemes attain consistently high accuracy across all tested dimensions, with small relative errors and root mean squared errors (RMSE). 
\begin{table}[h!]
\centering
\begin{tabular}{ccccccccc}
\hline
& & \multicolumn{3}{c}{DFS} & & \multicolumn{3}{c}{DBS} \\
\cline{3-5} \cline{7-9}
$d$ & Ref.\ value
& Mean (Std) & Rel.\ err (\%) & RMSE
&
& Mean (Std) & Rel.\ err (\%) & RMSE \\
\hline
1   & 7.1107 & 7.1431 (0.0050) & 0.4562 & 0.0328 & & 7.1380 (0.0040) & 0.3835 & 0.0276 \\
5   & 3.3518 & 3.3678 (0.0018) & 0.4790 & 0.0161 & & 3.3613 (0.0007) & 0.2859 & 0.0096 \\
10  & 2.4014 & 2.4115 (0.0017) & 0.4207 & 0.0102 & & 2.4080 (0.0005) & 0.2750 & 0.0066 \\
20  & 1.7143 & 1.7213 (0.0017) & 0.4124 & 0.0073 & & 1.7191 (0.0003) & 0.2814 & 0.0048 \\
30  & 1.4057 & 1.4111 (0.0016) & 0.3842 & 0.0056 & & 1.4097 (0.0002) & 0.2878 & 0.0041 \\
40  & 1.2205 & 1.2254 (0.0018) & 0.4057 & 0.0053 & & 1.2241 (0.0002) & 0.2986 & 0.0037 \\
50  & 1.0935 & 1.0980 (0.0020) & 0.4067 & 0.0048 & & 1.0969 (0.0003) & 0.3091 & 0.0034 \\
100 & 0.7766 & 0.7803 (0.0022) & 0.4751 & 0.0029 & & 0.7794 (0.0002) & 0.3538 & 0.0028 \\
\hline
\end{tabular}
\caption{Numerical results for American geometric put options using DFS and DBS. Mean estimates are reported with standard deviations in parentheses, based on 50 independent runs.}
\label{tab:american-geom-put}
\end{table}

\subsection{Basket Call}
As for American basket call options written on a basket of assets, define
\eqlnostar{}{
\overline X_t^a := \frac{1}{d}\sum_{i=1}^d X_t^i, \qquad 
a_t := \bigg(\frac{\sigma_1 X_t^1}{d},\frac{\sigma_2 X_t^2}{d},\cdots,\frac{\sigma_d X_t^d}{d}\bigg)^\top.
}
The payoff function is given by 
$\Phi(X_t) = S_t = \big(\overline X_t^a - K^{\mathrm{str}}\big)^+$. Then $\overline X^a$ follows the dynamics
\eqlnostar{}{
\dd \overline X_t^a = r\,\overline X_t^a\,\dd t + a_t \cdot \dd W_t,
\qquad 
\overline X_0^a = \frac{1}{d}\sum_{i=1}^d X_0^i.
}
By the It\^o-Tanaka formula, we know that the obstacle process admits the decomposition
\eqlnostar{}{
\dd S_t
= \Ib_{\{\overline X_t^a > K^{\mathrm{str}}\}}\, r\,\overline X_t^a\,\dd t
+ \Ib_{\{\overline X_t^a > K^{\mathrm{str}}\}}\, a_t \cdot \dd W_t
+ \frac{1}{2}\,\dd L_t^{K^{\mathrm{str}}}(\overline X^a).
}
Therefore, we obtain the representation for $U$ and $V$ as
\eqlnostar{}{
U_t = \Ib_{\{\overline X_t^a > K^{\mathrm{str}}\}}\, r\,\overline X_t^a,
\qquad 
V_t = \Ib_{\{\overline X_t^a > K^{\mathrm{str}}\}}\, a_t.
}
For call options with a nonnegative risk-free interest rate $r \ge 0$ and zero dividends, it is well known that early exercise is not optimal. This aligns with our parameter specifications. Therefore, the price of the American option coincides with that of the corresponding European option under these constraints. In our numerical experiments, we compute the Monte Carlo (MC) results with $10^6$ sample paths and employ antithetic variates for the corresponding European option. These MC values are subject to statistical error and should not be regarded as exact. The results for the DFS and the DBS are reported in Table \ref{tab:american-basket-call}. Again, both schemes show consistently high accuracy across all tested dimensions. The relative differences are very small, and the results from the DFS and the DBS are highly consistent with each other and closely aligned with the MC estimates.
\begin{table}[h!]
\centering
\begin{tabular}{cccccccc}
\hline
& & \multicolumn{2}{c}{DFS} & & \multicolumn{2}{c}{DBS} \\
\cline{3-4} \cline{6-7}
$d$ & MC Mean (Std)
& Mean (Std) & Rel.\ diff (\%)
&
& Mean (Std) & Rel.\ diff (\%) \\
\hline
5   & 4.6364 (0.0048) & 4.6359 (0.0031) & 0.0099 & & 4.6358 (0.0006) & 0.0123 \\
10  & 3.6321 (0.0032) & 3.6321 (0.0034) & 0.0007 & & 3.6319 (0.0006) & 0.0055 \\
20  & 2.9450 (0.0018) & 2.9440 (0.0035) & 0.0362 & & 2.9443 (0.0005) & 0.0239 \\
50  & 2.3841 (0.0010) & 2.3834 (0.0026) & 0.0269 & & 2.3833 (0.0003) & 0.0312 \\
100 & 2.1498 (0.0006) & 2.1498 (0.0029) & 0.0008 & & 2.1493 (0.0004) & 0.0240 \\
200 & 2.0320 (0.0004) & 2.0321 (0.0042) & 0.0080 & & 2.0316 (0.0003) & 0.0192 \\
\hline
\end{tabular}
\caption{Numerical results for American basket call options using DFS and DBS, based on 50 independent runs. Relative differences are computed with respect to European basket call prices obtained via Monte Carlo (MC) estimation.}
\label{tab:american-basket-call}
\end{table}

\subsection{Max Call}
As for American max call options written on a basket of assets, the payoff function is given by 
$\Phi(X_t) = S_t = \big(\max_{1\le i \le d} X_t^i - K^{\mathrm{str}}\big)^+$. At each time $t$, let us reorder the components of $X_t=(X_t^1,\ldots,X_t^d)^\top$ such that $X_t^{(1)} \le \cdots \le X_t^{(d)}$, so that $X_t^{(d)} := \max_{1\le i \le d} X_t^i$, while $X_t^{(d-1)}$ denotes the second-largest component. Then, by applying the decomposition formula of ranked continuous semimartingales (see \citet[Corollary 2.6]{banner2008local}), we have
\eqlnostar{}{
\dd X_t^{(d)} & = \sum_{i=1}^{d}\Ib_{\big\{X_t^i=X_t^{(d)}\big\}}\dd X_t^i + \frac{1}{2}\dd L_t^{0}\left(X^{(d)-(d-1)}\right) = r X_t^{(d)}\dd t
+ b_t \cdot \dd W_t
+ \frac{1}{2}\dd L_t^{0}\left(X^{(d)-(d-1)}\right),
}
where
$b_t := \big(\sigma_1 X_t^1 \Ib_{\{X_t^1 = X_t^{(d)}\}},\,
\sigma_2 X_t^2 \Ib_{\{X_t^2 = X_t^{(d)}\}},\,
\cdots,\,
\sigma_d X_t^d \Ib_{\{X_t^d = X_t^{(d)}\}}\big)^\top$ and $X^{(d)-(d-1)}:=X^{(d)} - X^{(d-1)}$.

By the It\^o-Tanaka formula, the obstacle process admits the decomposition
\eqlnostar{}{
\dd S_t
= \Ib_{\{X_t^{(d)} > K^{\mathrm{str}}\}} r X_t^{(d)}\dd t
+ \Ib_{\{X_t^{(d)} > K^{\mathrm{str}}\}} b_t \cdot \dd W_t
+ \frac{1}{2}\Ib_{\{X_t^{(d)} > K^{\mathrm{str}}\}}\dd L_t^{0}\big(X^{(d)-(d-1)}\big)
+ \frac{1}{2}\dd L_t^{K^{\mathrm{str}}}\big(X_t^{(d)}\big).
}
Therefore, we obtain the representation for $U$ and $V$ in this case as
\eqlnostar{}{
U_t = \Ib_{\{X_t^{(d)} > K^{\mathrm{str}}\}} r X_t^{(d)},
\qquad 
V_t = \Ib_{\{X_t^{(d)} > K^{\mathrm{str}}\}} b_t.
}
Regarding the numerical experiments, we report Monte Carlo (MC) results for the corresponding European options, based on $10^6$ sample paths with antithetic variates, in Table \ref{tab:american-max-call}, together with the results for the DBS and the DFS. Compared with the basket call case, the relative differences are slightly larger relative to the MC estimates. Nevertheless, both the DFS and the DBS remain close to each other and exhibit good agreement with the MC estimates across all dimensions.
\begin{table}[h!]
\centering
\begin{tabular}{cccccccc}
\hline
& & \multicolumn{2}{c}{DFS} & & \multicolumn{2}{c}{DBS} \\
\cline{3-4} \cline{6-7}
$d$ & MC Mean (Std)
& Mean (Std) & Rel.\ diff (\%)
&
& Mean (Std) & Rel.\ diff (\%) \\
\hline
5   & 26.9242 (0.0146) & 26.8971 (0.0328) & 0.1005 & & 26.8954 (0.0029) & 0.1069 \\
10  & 36.2590 (0.0166) & 36.1986 (0.0446) & 0.1664 & & 36.1986 (0.0068) & 0.1665 \\
20  & 45.1883 (0.0156) & 45.0815 (0.0437) & 0.2363 & & 45.0740 (0.0127) & 0.2529 \\
50  & 56.3525 (0.0153) & 56.1656 (0.0330) & 0.3316 & & 56.1715 (0.0130) & 0.3212 \\
100 & 64.4429 (0.0143) & 64.2023 (0.0307) & 0.3733 & & 64.1931 (0.0176) & 0.3876 \\
200 & 72.2992 (0.0141) & 71.9707 (0.0308) & 0.4544 & & 71.9722 (0.0116) & 0.4524 \\
\hline
\end{tabular}
\caption{Numerical results for American max call options using DFS and DBS, based on 50 independent runs. Relative differences are computed with respect to European max call prices obtained via Monte Carlo (MC) estimation.}
\label{tab:american-max-call}
\end{table}

\section{Conclusion}\label{sec:conclusion}
In this paper, we develop deep learning-based solvers for RBSDEs by combining a regularization framework with deep BSDE methods. In particular, we consider both a deep forward scheme (DFS) and a deep backward scheme (DBS) inspired by \citet{han2018solving,wang2018deep} for the regularized problem, and provide a detailed analysis of the DBS. We establish approximation error bounds in terms of the associated loss function, and show that the loss can be made arbitrarily small under the universal approximation property of neural networks. In addition, we present numerical experiments on high-dimensional American-style option pricing problems, which demonstrate the accuracy and effectiveness of the proposed methods.

\singlespacing
\normalem
\printbibliography
\end{document}